\def\CC{\mathcal{C}}
\def\KK{\mathcal{K}}
\def\MM{\mathcal{M}}\def\NN{\mathcal{N}}
\def\TT{\mathcal{T}}
\def\Bb{\mathbf{B}}
\def\Gb{\mathbf{G}}\def\Ib{\mathbf{I}}
\def\Mb{\mathbf{M}}
\def\Pb{\mathbf{P}}
\def\ab{\mathbf{a}}
\def\fb{\mathbf{f}}
\def\gb{\mathbf{g}}
\def\ub{\mathbf{u}}
\def\vb{\mathbf{v}}\def\xb{\mathbf{x}}
\def\Rbb{\mathbb{R}}
\def\ett{\mathtt{e}}
\def\ltt{\mathtt{l}}
\def\rtt{\mathtt{r}}
\def\utt{\mathtt{u}}
\def\vtt{\mathtt{v}}
\def\R{\Rbb}
\def\t{\top}
\def\*{\star}
\def\zerob{{\mathbf0}}
\DeclareMathOperator*{\argmin}{arg\,min}
\newcommand{\q}{\mathbf{q}}
\newcommand{\qd}{{\dot{\q}}}
\newcommand{\qdd}{{\ddot{\q}}}
\newcommand{\x}{\mathbf{x}}
\newcommand{\xd}{{\dot{\x}}}
\newcommand{\xdd}{{\ddot{\x}}}
\newcommand{\y}{\mathbf{y}}
\newcommand{\yd}{{\dot{\y}}}
\newcommand{\ydd}{{\ddot{\y}}}
\newcommand{\z}{\mathbf{z}}
\newcommand{\zd}{{\dot{\z}}}
\newcommand{\zdd}{{\ddot{\z}}}
\newcommand{\f}{\mathbf{f}}
\newcommand{\J}{\mathbf{J}}
\newcommand{\Jd}{{\dot{\J}}}
\newcommand{\B}{\mathbf{B}}
\newcommand{\G}{\mathbf{G}}
\newcommand{\I}{\mathbf{I}}
\newcommand{\M}{\mathbf{M}}
\newcommand{\sdot}[2]{\overset{\lower0.1em\hbox{$\scriptscriptstyle #2$}}{#1}}
\def\flow{RMP{flow}\xspace}
\def\algebra{RMP-algebra\xspace}
\def\tree{RMP-tree\xspace}
\def\pushforward{\texttt{pushforward}\xspace}
\def\pullback{\texttt{pullback}\xspace}
\def\resolve{\texttt{resolve}\xspace}
\newtheorem{theorem}{Theorem}[section]
\newtheorem{corollary}{Corollary}[theorem]
\newtheorem{lemma}[theorem]{Lemma}
\newtheorem{proposition}[theorem]{Proposition}
\begin{document}

\title{Stable, Concurrent Controller Composition for Multi-Objective \\Robotic Tasks}


\author{Anqi~Li$^\dagger$, Ching-An Cheng$^\dagger$, Byron Boots$^\dagger$, and Magnus~Egerstedt$^\dagger$
\thanks{*This work was sponsored in part by Grant No. W911NF-17-2-0181 from the U.S. Army Research Laboratory DCIST CRA.}
\thanks{$^\dagger$Anqi Li, Ching-An Cheng, Byron Boots, and Magnus Egerstedt are with the Institute for Robotics and Intelligent Machines, Georgia Institute of Technology, Atlanta, GA 30332, USA. Email: {\tt\{anqi.li, cacheng, magnus\}@gatech.edu}, {\tt bboots@cc.gatech.edu}}
}

\maketitle

\begin{abstract}

Robotic systems often need to consider multiple tasks concurrently. This challenge calls for controller synthesis algorithms that fulfill multiple control specifications 
while maintaining the stability of the overall system. In this paper, we decompose multi-objective tasks into subtasks, where individual subtask controllers are designed independently  and then combined to generate the overall control policy. In particular, we adopt Riemannian Motion Policies (RMPs), a recently proposed controller structure in robotics, and, \flow, its associated computational framework for combining RMP controllers.
We re-establish and extend the stability results of \flow through a rigorous Control Lyapunov Function (CLF) treatment. We then show that \flow can
stably combine individually designed subtask controllers that satisfy certain CLF constraints. This new insight leads to an efficient CLF-based computational framework to generate stable controllers that consider all the subtasks simultaneously.
Compared with the original usage of \flow, our framework provides users the flexibility to incorporate design heuristics through nominal controllers for the subtasks.
We validate the proposed computational framework through numerical simulation and robotic implementation.

\end{abstract}

\IEEEpeerreviewmaketitle

\section{Introduction}


Multi-objective tasks are often involved in the control of robotic systems \cite{peters2008unifying,wang2016multi,ratliff2018riemannian,morris2013sufficient}.
For example, 
a group of robots may be tasked with achieving a certain formation, moving toward a goal region, while avoiding collisions with each other and  obstacles~\cite{wang2016multi}. These types of problems call for algorithms that can systematically generate a stable controller capable of fulfilling multiple control specifications simultaneously.

A classic strategy is to first design a controller for each individual control specification, and then provide a high-level rule to \emph{switch} among them.
This idea has been frequently exploited in robotics~\cite{arkin1998behavior}. For example, it is common practice to switch to a collision avoidance controller when the robot risks colliding with obstacles~\cite{arkin1998behavior}.
The stability of switching systems has been thoroughly investigated, e.g. by finding a common or switched Lyapunov function for the systems among all designed controllers~\cite{liberzon1999stability,vu2005common,narendra1994common,daafouz2002stability}.
However, a fundamental limitation shared by these switching approaches is that only a single controller is active at a time and hence only a subset of the control specifications is considered. If not designed properly, some controllers for secondary tasks might take over the operation for most of the time. For example, when a robot navigates in a cluttered environment, the collision avoidance controller can dominate other controllers and the primary tasks may never be considered~\cite{wang2016multi}. Therefore, it may be more desirable to
\emph{blend} controllers rather than impose a hard \emph{switch} between them, so that all tasks can be considered simultaneously.

In robotics, the strategy of weighting controllers for different tasks has been explored in potential field methods~\cite{KhatibPotentialFields1985,arkin1998behavior}. While easy to implement such schemes, it can be difficult to provide formal stability guarantees for the overall ``blended'' system, especially when the weights are state-dependent.
In some cases, the stability of the overall system has been shown through a common Lyapunov function~\cite{vu2005common,narendra1994common}, but the existence of a common Lyapunov function is not guaranteed. 
Finding a common Lyapunov function can be particularly challenging for robotics applications because the tasks can potentially conflict, e.g. the robot may need to move through a cluttered environment to go to the goal.

The framework of null-space or hierarchical control handles this problem by assigning priorities to the tasks, and hence to the controllers~\cite{peters2008unifying,escande2014hierarchical}. The performance of the high-priority tasks can be guaranteed by forcing the lower-priority controllers to act on the null space of high-priority tasks.
However, several problems surface as the number of tasks increases.
One problem is the algorithmic singularities introduced by the usage of multiple levels of projections \cite{peters2008unifying,escande2014hierarchical}. Most algorithms are designed under the assumption of singular-free conditions. But this assumption is unlikely to hold in practice, especially when there are a large number of tasks, and the system can easily become unstable if the algorithmic singularities occur.
In addition, similar to the switching scheme, it is possible that secondary controllers, e.g. collision avoidance controllers, become the ones with high-priorities and the primary task can not be achieved.
While several heuristics~\cite{dietrich2012continuous,lee2012intermediate} have been proposed to shift the control priorities dynamically,
whether such systems can be globally stabilized in presence of the algorithmic singularities is still an open question~\cite{dietrich2018hierarchical}.

Control Lyapunov functions (CLFs) and control barrier functions (CBFs) constitute another class of methods to encode multiple control specifications~\cite{morris2013sufficient,ames2014control,wang2016multi}. In the CLF and CBF frameworks, the control specifications are encoded as constraints on the time derivatives of Lyapunov or barrier function candidates, and a control input that satisfies all the constraints is solved through a constrained optimization problem. However, in the case of conflicting specifications, the CLF and CBF frameworks 
suffer from feasibility problems~\cite{squires2018constructive}, i.e. there does not exist any controller that satisfies all the control specifications. Although the CLF constraints 
can be relaxed through slack variables~\cite{ames2014control}, they also add a new set of hyperparameters to trade off the importance of different specifications; care must be taken in tuning these hyperparameters in order to achieve desired performance properties and maintain stability.
Finally, it can be hard to encode certain high-dimensional control specifications, such as damping behaviors,
as CLF or CBF constraints.

In this paper, we focus on
\emph{weighting} individual controllers. We aim to address two interrelated questions:
\begin{itemize}
    \item How can controllers be composed while guaranteeing system stability?
    \item How should individual controllers be designed so that they can be easily combined?
\end{itemize}

Although ensuring stability is challenging for arbitrary blending schemes, we design a systematic process to combine controllers so that the stability of the overall system is guaranteed.
Our framework considers all control specifications simultaneously, while providing the flexibility to vary the importance of different controllers based on the robot state.
Moreover, instead of considering specifications in the configuration space,
we allow for controllers defined directly on different spaces or manifolds\footnote{Specifications defined on non-Euclidean manifolds are common in robotics; for example, in obstacle avoidance, obstacles become holes in the space and the geodesics flow around them~\cite{ratliff2018riemannian}. } for different specifications. 

This separation can largely simplify the design and computation of each individual controller, because it only concerns a possibly lower-dimensional manifold that is directly relevant to a particular control specification. For example, controllers for different links of a robot manipulator can be designed in their corresponding (possibly non-Euclidan) workspaces.
%
We leverage a recent approach to controller synthesis in robotics,
Riemannian Motion Policies (RMPs)~\cite{ratliff2018riemannian} and \flow~\cite{cheng2018rmpflow}, which have been successfully deployed on robot manipulators~\cite{cheng2018rmpflow,ratliff2018riemannian} and multi-robot systems~\cite{li2019multi}.
An RMP is a mathematical object that is designed to describe a controller on a manifold, and \flow is a computational framework for combining RMPs designed on different task manifolds into a controller for the entire system.
A particular feature of \flow is the use of state-dependent importance weightings of controllers based on the properties of the corresponding manifolds.
It is show in ~\cite{cheng2018rmpflow} that when RMPs are generated from  Geometric Dynamical Systems (GDSs), the combined controller is Lyapunov-stable.

RMPs and \flow were initially studied in terms of the geometric structure of second-order differential equations~\cite{cheng2018rmpflow}, where Riemannian metrics on manifolds (of GDSs) naturally provide a geometrically-consistent notion of task importance and hence a mechanism to combine controllers (i.e. \flow).
While differential geometry provides a mathematical interpretation of \flow, in practice, the restriction to GDSs for control specifications could limit performance and make controller design  difficult.

To overcome this limitation, we revisit \flow with a rigorous CLF treatment and show that the existing computational framework of \flow actually ensures stability for a \emph{larger} class of systems than GDSs.
This discovery is made possible by an alternative stability analysis of \flow and an induction lemma that characterizes how the  stability of individual controllers is propagated to the combined controller in terms of CLF constraints.
Hence, we can reuse \flow to stably combine a range of controllers, not limited to the ones consistent with GDSs. 
%
To demonstrate, we introduce a computational framework called \emph{\flow--CLF}, where we augment \flow with CLF constraints to generate a stable controller given user-specified nominal controllers for each of the control specifications. This allows  users to incorporate additional design knowledge given by, e.g. heuristics, motion planners, and human demonstrations, without worrying about the geometric properties of the associated manifolds. \flow--CLF can be viewed as a soft version of the QP--CLF framework~\cite{morris2013sufficient} that guarantees the stability of the overall system, while ensuring feasibility even when control specifications are  conflicting.


\section{Background}\label{sec:background}

We first review \flow~\cite{ratliff2018riemannian,cheng2018rmpflow} and CLFs~\cite{ames2014control,morris2013sufficient}, which are different ways to combine control specifications. 

\subsection{Riemannian Motion Policies (RMPs) and \flow}\label{sec:rmp}

Consider a robot with configuration space $\CC$ which is a smooth $d$-dimensional manifold.
We assume that $\CC$ admits a global 
\emph{generalized coordinate} $\q: \CC \to \R^d$ and follow the assumption in~\cite{cheng2018rmpflow} that the system can be feedback linearized in such a way that it can be controlled directly through the generalized acceleration\footnote{This setup can be extended to torque controls as in~\cite{peters2008unifying}.}, i.e. $ \qdd = \ub(\q,\qd)$.
We call $\ub$
a \emph{control policy} or a \emph{controller}, and $(\q, \qd)$ the \emph{state}.

The task is often defined on a different manifold from $\CC$ called the \emph{task space}, denoted $\TT$. 
A task may admit further structure as a composition of \emph{subtasks} (e.g. reaching a goal, avoiding collision with obstacles, etc.). In this case, we can treat the task space as a collection of multiple lower-dimensional \emph{subtask spaces}, each of which is a manifold. 
In other words, each subtask space is associated with a control specification and together the task space $\TT$ describes the overall multi-objective control problem.

\begin{figure}
    \centering
    \vspace{3mm}
    \resizebox{!}{2in}{\includegraphics{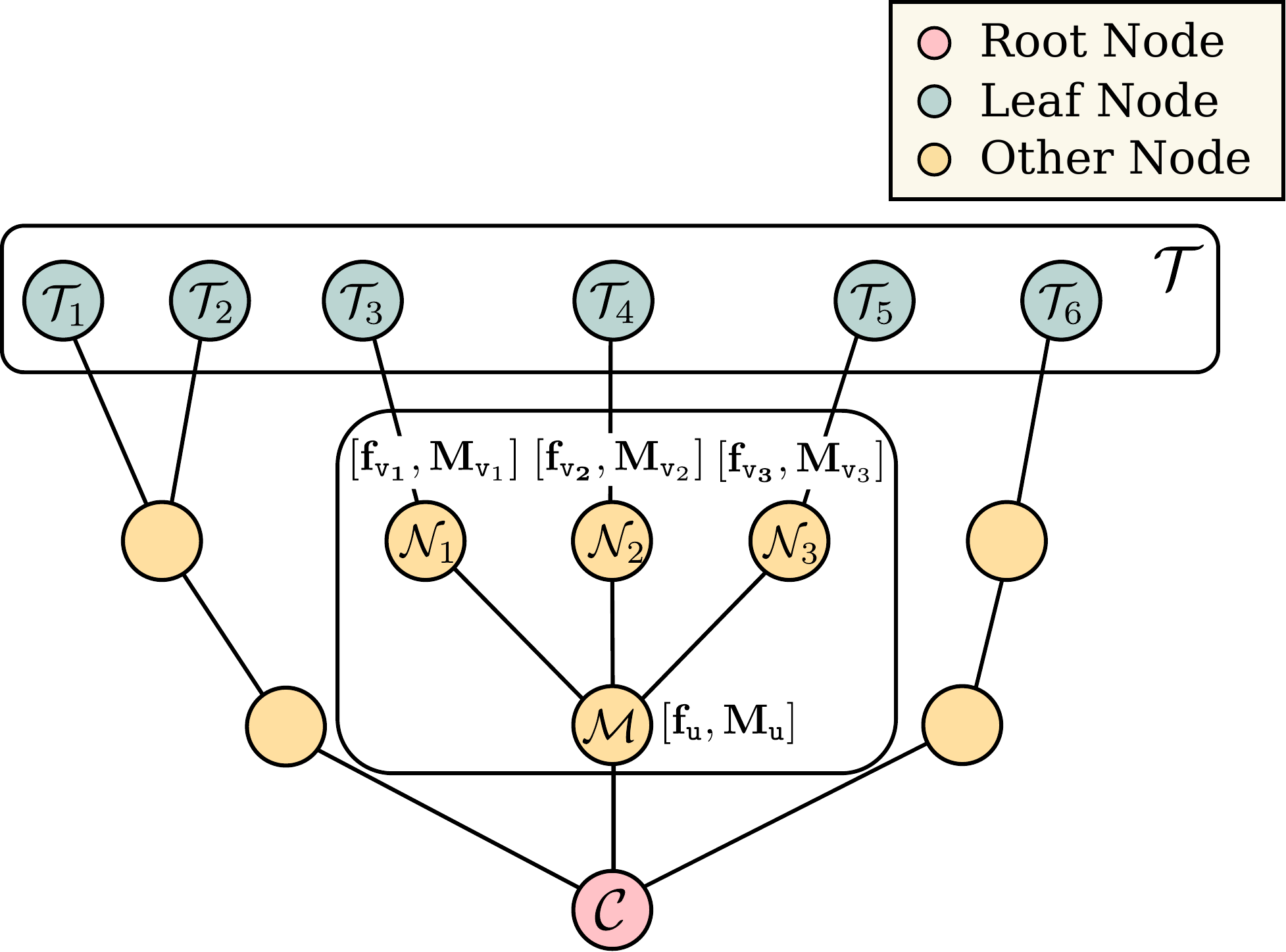}}
    \caption{An example of an \tree. See text for details. \vspace{-3mm}}
    \label{fig:rmp_tree}
\end{figure}

Ratliff et al. \cite{ratliff2018riemannian} propose \emph{Riemannian Motion Policies} (RMPs) to represent control policies on manifolds. Consider an $m$-dimensional manifold $\MM$ with a global coordinate $\x \in \R^m$. 
An RMP on $\MM$ can be represented by two forms, its \emph{canonical form} $(\ab, \M)^\MM$ and its \emph{natural form}  $[\f, \M]^\MM$, where $\ab: (\x,\xd)\mapsto \ab(\x,\xd)$ is the desired acceleration, $\M: (\x,\xd)\mapsto \M(\x,\xd)\in\R_+^{m\times m}$ is the inertial matrix, and
$\f = \M \ab$ is the desired force. It is important to note that $\M$ and $\f$ do not necessarily correspond to physical quantities; $\M$ defines the importance of an RMP when combined with other RMPs, and $\f$ is proposed for computational efficiency.

\flow~\cite{cheng2018rmpflow} is a recursive algorithm to generate control policies on the configuration space given the RMPs of subtasks.
It introduces: 1) a data structure, the \emph{\tree}, for computational efficiency;
and 2) a set of operators, the \emph{\algebra}, to propagate information across the \tree.

An \tree is a directed tree, which encodes the computational structure of the task map from $\CC$ to $\TT$ (see~\cref{fig:rmp_tree}).
In the \tree, a node is associated with the state and the RMP on a manifold, and an edge is augmented with a smooth map from a parent-node manifold to a child-node manifold.
In particular, the root node $\rtt$ is associated with the state of the robot $(\q,\qd)$ and its control policy on the configuration space $(\ab_\rtt,\M_\rtt)^\CC$, and each leaf node $\ltt_k$ is associated with the RMP $(\ab_{\ltt_k},\M_{\ltt_k})^{\TT_{k}}$ for a subtask, where $\TT_{k}$ is a subtask manifold. Recall the collection $\{\TT_{k}\}_{k=1}^K$ is the task space $\TT$, where $K$ is the number of tasks.

To illustrate how the \algebra operates, consider a node $\utt$ with $N$ child nodes $\{\vtt_j \}_{j=1}^N$. Let $\ett_j $ denote the edge from $\utt$ to $\vtt_j $ and let $\psi_{\ett_j}$ be the associated smooth map. Suppose that $\utt$ is associated with an RMP $[\f_\utt, \M_\utt ]^\MM$ on a manifold $\MM$ with coordinate $\xb$, and $\vtt_j$ is associated with an RMP $[\f_{\vtt_j}, \M_{\vtt_j} ]^{\NN_j}$ on a manifold $\NN_j$ with coordinate $\y_j$. (Note that here we represent the RMPs in their natural form.) The \algebra consists of the following three operators:

\begin{enumerate}
\item  \pushforward is the operator to forward propagate the \emph{state} from the parent node $\utt$ to its child nodes $\{\vtt_j \}_{j=1}^N$. Given the state $(\x,\xd)$ from $\utt$, it computes $(\y_j, \yd_j) = (\psi_{\ett_j}(\x) , \J_{\ett_j} (\x)\,\xd )$ for each child node $v_j$, where $\J_{\ett_j} = \partial_\x \psi_{\ett_j}$ is the Jacobian matrix of $\psi_{\ett_j}$.

\item \pullback is the operator to backward propagate the RMPs from the child nodes to the parent node. Given $\{[\f_{\vtt_j}, \M_{\vtt_j} ]^{\NN_j}\}_{j=1}^N$ from the child nodes, the RMP $[\f_{\utt}, \M_{\utt} ]^{\MM}$ for the parent node $\utt$ is computed as,
\begin{equation*}\small
\f_\utt = \sum_{j=1}^N \J_{\ett_j}^\t (\f_{\vtt_j} - \M_{\vtt_j} \Jd_{\ett_j} \xd),\,\, \M_\utt = \sum_{j=1}^N \J_{\ett_j}^\t \M_{\vtt_j} \J_{\ett_j}.
\end{equation*}


\item \resolve maps an RMP from its natural form to its canonical form. Given $[\f_\utt, \M_\utt]^{\MM}$, it outputs $(\ab_\utt, \M_\utt)^{\MM}$ with $\ab_\utt = \M^{\dagger}\,\f_\utt$, where $\dagger$ denotes Moore-Penrose inverse.
\end{enumerate}

\flow performs control policy generation through
running the \algebra on the \tree.
It first performs a forward pass, by recursively calling \pushforward from the root node to the leaf nodes to update the state associated with each node on the \tree. Second, every leaf node $\ltt_k$ \emph{evaluates} its natural form RMP $\{(\fb_{\ltt_k},\M_{\ltt_k})^{\TT_{\ltt_k}}\}_{k=1}^K$ given its associated state. Then, \flow performs a backward pass, by recursively calling \pullback from the leaf nodes to the root node to back propagate the RMPs in the natural form. Finally, \resolve is applied to the root node to transform the RMP $[\f_\rtt, \M_\rtt]^\CC$ into its canonical form $(\ab_\rtt, \M_\rtt)^\CC$ and set the control policy as $\ub = \ab_\rtt$.

\flow was originally analyzed based on a differential geometric interpretation. Cheng et al.~\cite{cheng2018rmpflow} consider the inertial matrix $\M$ generated by a Riemannian metric on the \emph{tangent bundle} of the manifold $\MM$ (denoted as $T\MM$). Let $\Gb: (\x,\xd)\mapsto \Gb(\x,\xd)\in\R^{m\times m}_{+}$ be a (projected) Riemannian metric and define 
the \emph{curvature terms}
\vspace{-2mm}
\begin{equation}\label{eq:curvatures}
    \begin{split}
    \bm\Xi_{\G}(\x,\xd)&\,\coloneqq\,\frac{1}{2} \sum_{i=1}^m \, \dot{x}_i\,\partial_{\xd}\, \gb_{i}(\x,\xd),\\
    \bm\xi_{\G}(\x,\xd)&\,\coloneqq\,\sdot{\Gb}{\x}(\x,\xd)\,\xd - \frac{1}{2} \nabla_\x\, (\xd^\t \Gb(\x,\xd)\, \xd),
    \end{split}
\end{equation}
where $\sdot{\Gb}{\xb}(\x,\xd) \coloneqq  [\partial_{\x} \, \gb_{i} (\x,\xd)\, \xd]_{i=1}^m$, $\gb_{i}(\x,\xd)$ is the $i$th column of $\Gb(\x,\xd)$, and $x_i$ is the $i$th component of $\x$. The inertial matrix $\M(\x,\xd)$ is then related to $\Gb(\x,\xd)$ through,
\begin{equation}\label{eq:M}
    \M(\x,\xd)=\Gb(\x,\xd) + \bm\Xi_{\G}(\x,\xd).
\end{equation}

Under this geometric interpretation, RMPs on a manifold $\MM$ can be (but not necessarily) generated from a class of systems called \emph{Geometric Dynamical Systems}~(GDSs)~\cite{cheng2018rmpflow}, whose dynamics are on the form of
\begin{equation}\label{eq:GDS}
 \Mb(\x,\xd)\,\xdd
+ \bm\xi_{\G}(\x,\xd)  = - \nabla_\x \Phi(\x) - \Bb(\x,\xd)\,\xd,
\end{equation}
where $\B: \R^m \times \R^m \to \R^{m\times m}_{+}$ is the \emph{damping matrix}, and $\Phi: \R^m \to \R$ is the \emph{potential function}. 
When $\G(\x,\xd) = \G(\x)$, the GDSs reduce to the widely studied \emph{Simple Mechanical Systems}~\cite{bullo2004geometric}. 

The stability properties of \flow is analyzed in~\cite{cheng2018rmpflow} under the assumption that \emph{every} leaf-node RMP is specified as a GDS~\eqref{eq:GDS}. Before stating the stability theorem, let us define the metric, damping matrix, and potential function for every node in the \tree: For a leaf node, its metric, damping matrix, and potential are defined naturally by its underlying GDS. For a non-leaf node $\utt$ with $N$ children $\{\vtt_j\}_{j=1}^N$, these terms are  defined recursively by the relationship,
\begin{equation}\label{eq:mdp}\footnotesize
    \begin{split}
        \Gb_\utt = \sum_{j=1}^N\J_{\ett_j}^\t\G_{\vtt_j}\J_{\ett_j},\,
        \B_\utt = \sum_{j=1}^N\J_{\ett_j}^\t \B_{\vtt_j} \J_{\ett_j},\,
        \Phi_\utt =  \sum_{j=1}^N\Phi_{\vtt_j} \circ \psi_{\ett_j},
    \end{split}
    \vspace{-2em}
\end{equation}
where $\G_{\vtt_j}$, $\B_{\vtt_j}$ and $\Phi_{\vtt_j}$ are the metric, damping matrix, and potential function for the $j$th child. 
The stability results for \flow are stated below.
\begin{theorem}\label{thm:stability}
\textnormal{\cite{cheng2018rmpflow}}
Let $\G_\rtt$, $\B_\rtt$, and $\Phi_\rtt$ be the metric, damping matrix, and potential function of the root node defined in (\ref{eq:mdp}). If each leaf node is given by a GDS, $\G_\rtt, \B_\rtt \succ 0 $, and $\Mb_\rtt$ is non-singular, then the system converges to a forward invariant set $\CC_\infty \coloneqq \{(\q,\qd) : \nabla_\q \Phi_\rtt = 0, \qd = 0 \}$.
\end{theorem}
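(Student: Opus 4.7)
The plan is to reduce the stability question to a single energy-style Lyapunov argument at the root, after first establishing that the \algebra preserves the GDS structure as one ascends the \tree. The key inductive claim I would prove is: if every child $\vtt_j$ of a node $\utt$ carries an RMP generated by a GDS with triple $(\G_{\vtt_j},\B_{\vtt_j},\Phi_{\vtt_j})$, then the pullback RMP at $\utt$ is itself generated by a GDS with the recursively defined $(\G_\utt,\B_\utt,\Phi_\utt)$ from (\ref{eq:mdp}). Chaining this from the leaves up to $\rtt$ then yields a single GDS on $\CC$ governed by $(\G_\rtt,\B_\rtt,\Phi_\rtt)$, and the root-level equation $\M_\rtt\qdd+\bm\xi_{\G_\rtt}=-\nabla_\q\Phi_\rtt-\B_\rtt\qd$ is exactly what the \resolve step at the root produces.

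For the inductive step I would substitute $\yd_j=\J_{\ett_j}\xd$ and $\ydd_j=\J_{\ett_j}\xdd+\Jd_{\ett_j}\xd$ into each child GDS, premultiply by $\J_{\ett_j}^\t$, and sum over $j$. The $-\M_{\vtt_j}\Jd_{\ett_j}\xd$ correction built into the \pullback formula is precisely what cancels the spurious term from the change of variable in $\ydd_j$. The remaining bookkeeping is on the curvature term: using (\ref{eq:curvatures}) together with the chain rule on $\G_\utt=\sum_j\J_{\ett_j}^\t\G_{\vtt_j}\J_{\ett_j}$, one verifies that $\sum_j\J_{\ett_j}^\t\bm\xi_{\G_{\vtt_j}}$, combined with the residual Jacobian-derivative contributions, assembles into $\bm\xi_{\G_\utt}$ expressed in $\x$-coordinates. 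The potential term is handled by the standard chain rule $\nabla_\x(\Phi_{\vtt_j}\circ\psi_{\ett_j})=\J_{\ett_j}^\t\nabla_{\y_j}\Phi_{\vtt_j}$, and damping propagates analogously as a quadratic form.

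With the root GDS in hand, I would take the natural energy-type Lyapunov candidate
\[
V(\q,\qd)=\tfrac{1}{2}\,\qd^\t\G_\rtt(\q,\qd)\,\qd+\Phi_\rtt(\q),
\]
which is bounded below whenever $\G_\rtt\succ 0$ and $\Phi_\rtt$ is. A direct time-differentiation together with the definition (\ref{eq:curvatures}) and the identity (\ref{eq:M}) gives
\[
\tfrac{d}{dt}\!\left(\tfrac{1}{2}\qd^\t\G_\rtt\qd\right)=\qd^\t\M_\rtt\qdd+\qd^\t\bm\xi_{\G_\rtt},
\]
so adding $\dot\Phi_\rtt=\qd^\t\nabla_\q\Phi_\rtt$ and substituting the root GDS equation yields $\dot V=-\qd^\t\B_\rtt\qd\le 0$ since $\B_\rtt\succ 0$. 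LaSalle's invariance principle then applies on any sublevel set: the largest invariant subset of $\{\dot V=0\}$ enforces $\qd\equiv 0$, hence $\qdd\equiv 0$, so the root GDS collapses to $\nabla_\q\Phi_\rtt=0$. This set is exactly $\CC_\infty$, and its forward invariance is immediate from the same equation.

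The main obstacle is the inductive pullback-preserves-GDS step, specifically verifying that the curvature quantity $\bm\xi_\G$ from (\ref{eq:curvatures}) transforms covariantly under the Jacobian change of coordinates while $\G$ depends on both $\x$ and $\xd$. The $\xd$-dependence contributes extra terms through $\partial_{\xd}\G_{\vtt_j}$ that are absent in classical simple-mechanical-system pullbacks; the \algebra is calibrated so that the $-\M_{\vtt_j}\Jd_{\ett_j}\xd$ correction together with the invariance of the quadratic form $\xd^\t\G\xd$ absorbs them. Carefully matching the componentwise definitions across coordinates is the key technical lemma that makes the whole recursion go through.
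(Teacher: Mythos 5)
Your route is genuinely different from the one this paper takes, and it has a gap at exactly the step you flag as the main obstacle. You propose to show that \pullback maps a family of child GDSs to a parent GDS with the recursive data $(\G_\utt,\B_\utt,\Phi_\utt)$ of \eqref{eq:mdp}, so that the root literally satisfies $\M_\rtt\qdd+\bm\xi_{\G_\rtt}=-\nabla_\q\Phi_\rtt-\B_\rtt\qd$. That closure claim is false in general when the metrics are velocity-dependent: under $\y_j=\psi_{\ett_j}(\x)$, $\yd_j=\J_{\ett_j}\xd$, the $\partial_{\xd}$-derivatives entering $\bm\Xi_{\G}$ and $\bm\xi_{\G}$ in \eqref{eq:curvatures} do not transform covariantly, so $\bm\xi_{\G_\utt}$ computed directly from $\G_\utt=\sum_j\J_{\ett_j}^\t\G_{\vtt_j}\J_{\ett_j}$ is \emph{not} what $\sum_j\J_{\ett_j}^\t\bm\xi_{\G_{\vtt_j}}$ plus the $-\M_{\vtt_j}\Jd_{\ett_j}\xd$ corrections assemble into. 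This is precisely why \cite{cheng2018rmpflow} had to introduce the auxiliary class of \emph{structured} GDSs: the pullback of a GDS is in general only a structured GDS, and only that weaker class is closed under \pullback. Your argument silently assumes the stronger closure, so the root equation you substitute into $\dot V$ is not available as written. (In the special case $\G_{\vtt_j}=\G_{\vtt_j}(\y_j)$, i.e.\ simple mechanical systems, your inductive step is correct.)

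What survives of your plan is exactly its scalar shadow, and that is how this paper actually argues. Even though the vector-valued dynamics does not pull back to a GDS, the energy rate does add up: along the \pullback dynamics, $\dot V_\utt=\sum_j\dot V_{\vtt_j}$. This is the content of \cref{lm:induction_general}, which never identifies the class of the pulled-back dynamics but only propagates the identity $\dot V_{\vtt_j}=-U_{\vtt_j}$ (for a GDS leaf, $U_{\ltt_k}=\zd_k^\t\B_{\ltt_k}\zd_k$) from leaves to root, yielding $\dot V_\rtt=-\qd^\t\B_\rtt\qd$ in \cref{cor:lyap_decay_gds}; LaSalle then finishes exactly as in your last paragraph. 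Your concluding step (that $\qd\equiv 0$ on the invariant set forces $\qdd=0$, the curvature terms vanish at $\qd=0$, and hence $\nabla_\q\Phi_\rtt=0$) is fine. To repair the proof, either weaken your inductive invariant from ``pullback preserves GDSs'' to the energy-rate identity of \cref{lm:induction_general}, or restrict to velocity-independent metrics where your closure lemma actually holds.
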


\subsection{Control Lyapunov Functions (CLFs)}\label{sec:clf}

Control Lyapunov Function (CLF) methods~\cite{ames2014control,morris2013sufficient,sontag1983lyapunov}  encode control specifications as Lyapunov function candidates. In these methods, controllers are designed to satisfy the inequality constraints on the time derivative of the Lyapunov function candidates.

Consider a dynamical system in control-affine form,
\begin{equation}\label{eq:control_affine}
    \dot{\bm\eta}\,=\,f(\bm\eta) + g(\bm\eta)\,\ub,
\end{equation}
where $\bm\eta\in\R^n$ and $\ub\in\R^m$ are the state and control input for the system.
We assume that $f$ and $g$ are locally Lipschitz continuous, and the system~\eqref{eq:control_affine} is forward complete, i.e. $\bm\eta(t)$ is defined for all $t\geq 0$.
For second-order systems considered by \flow, we have $\bm\eta=[\x^\top\,\,\xd^\top]^\top$,
\begin{equation}
    f(\bm\eta)\equiv\begin{bmatrix}
    \zerob & \Ib \\
    \zerob & \zerob
    \end{bmatrix},\quad
    g(\bm\eta)\equiv
    \begin{bmatrix}
    \zerob\\
    \Ib
    \end{bmatrix}.
\end{equation}

Suppose that a Lypapunov function candidate $V(\bm\eta)$ is designed for a control specification. The control input is then required to satisfy a CLF constraint, e.g. $\dot{V}\leq -\alpha(V)$, 
where $\alpha:\mathbb{R}_+\to\mathbb{R}_+$ is a locally Lipschitz class $\mathcal{K}$ function~\cite{khalil1996noninear} (i.e. $\alpha$ is strictly increasing and $\alpha(0)=0$).
In the case of control-affine system, the CLF constraint becomes a linear inequality constraint on control input $\ub$ given state $\bm\eta$,
\begin{equation}\label{eq:decay_linear}
    L_g V(\bm\eta) \, \ub \leq -L_f V(\bm\eta) -\alpha(V(\bm\eta)),
\end{equation}
where $L_f V$ and $L_g V$ are the \emph{Lie derivatives} of $V$ along $f$ and $g$, respectively.

 When there are multiple control specifications, one can design Lyapunov function candidates $\{V_k\}_{k=1}^{K}$ separately. Then the controller synthesis problem becomes finding a controller that satisfies all the linear inequalities given by the Lyapunov function candidates. Morris et al.~\cite{morris2013sufficient} propose a computational framework, QP--CLF, 
that solves for the controller through a Quadratic programming (QP) problem that augments the constraints with a quadratic objective:
\begin{equation}\label{eq:qp_clf}
\begin{aligned}
& \underset{\ub}{\min}& &
\frac{1}{2}\,\ub^\top
\,H(\bm\eta)\,\ub + F(\bm\eta)^\top\,\ub\\
& \text{s.t.}
& & L_g V_k(\bm\eta) \, \ub \leq -L_f V_k(\bm\eta) -\alpha_k(V_k(\bm\eta)),\\
& & & \qquad \forall\,k\in\{1,\ldots,K\}.
\end{aligned}
\end{equation}

However, when the specifications are conflicting, it may not be possible to enforce the CLF constraints for all $\{V_k\}_{k=1}^{K}$ since the optimization problem~\eqref{eq:qp_clf} can become infeasible~\cite{morris2013sufficient}. In~\cite{ames2014control}, Ames et al. introduce \emph{slack variables} $\{\delta_k\}_{k=1}^K$ so that the optimization problem is always feasible. Let $\bar{\ub} = [\ub^\top\,\,\delta_1\,\,\ldots\,\,\delta_{K}]^\top$ denote 
all decision variables.
Then the relaxed optimization problem becomes,
\begin{equation}\label{eq:qp_clf_slack}
\begin{aligned}
& \underset{\bar{\ub}}{\min}& &
\frac{1}{2}\,\bar{\ub}^\top
\,\bar{H}(\bm\eta)\,\bar{\ub} + \bar{F}(\bm\eta)^\top\,\bar{\ub}\\
& \text{s.t.}
& & L_g V_k(\bm\eta) \, \ub \leq -L_f V_k(\bm\eta) -\alpha(V_k(\bm\eta)) + \delta_k,\\
& & & \qquad \forall\,k\in\{1,\ldots,K\},
\end{aligned}
\end{equation}
where $\bar{H}(\bm\eta)$ and $\bar{F}(\bm\eta)$ encode how the original objective function and the CLF constraints are balanced. However, 
care must be taken in tuning $\bar{H}(\bm\eta)$ and $\bar{F}(\bm\eta)$ to achieve desired performance properties and maintain stability.

\section{The CLF Interpretation of \flow}\label{sec:generalized_stability}

The goal of this paper is to combine control policies specified for subtask manifolds into a control policy for the robot with stability guarantees. \flow provides a favorable computational framework but its original analysis is limited to subtask control policies generated by GDSs.
This assumption is rather unsatisfying, as 
the users need to encode the control specifications as GDS behaviors. 
Further, this restriction can potentially limit the performance of the subtasks and result in unnecessary energy consumption by the system.
Although empirically \flow has been shown to work with non-GDS leaf policies~\cite{cheng2018rmpflow}, it is unclear if the overall system is still stable.

In this section, we show that the \algebra actually preserves the stability of a wider range of leaf-node control policies than GDSs.
We relax the original GDS assumption in~\cite{cheng2018rmpflow} to a more general CLF constraint on each leaf node, and provide a novel stability analysis of \flow.
These results allow us to reuse \flow for combining a more general class of control policies, which we will demonstrate  by combining controllers based on CLF constraints.

%



\subsection{An Induction Lemma}

In order to establish CLF constraints on leaf nodes that guarantee stability, we first need to understand how the \algebra, especially the \pullback operator, connects the stability results of the child nodes to the parent node.

Again, let us consider a node $\utt$ with $N$ child nodes $\{\vtt_j \}_{j=1}^N$, in which $\utt$ is associated with 
a manifold $\MM$ with coordinate $\xb$, and $\vtt$ is associated with 
a manifold $\NN_j$ with coordinate $\y_j$. In addition, let 
$\psi_{\ett_j}:\x\mapsto \y_j$ be the smooth map between manifolds $\MM$ and $\NN_j$. We furthur assume that $\psi_{\ett_j}$ is surjective, i.e., $\NN_j=\psi_{\ett_j}(\MM)$. 

Let us associate each child node $\vtt_j$ with a proper, continuously differentiable and lower-bounded potential $\Phi_{\vtt_j}$ on its manifold $\NN_j$ along with a continuously differentiable Riemannian metric $\G_{\vtt_j}$ on its tangent bundle $\TT\NN_j$. Then, for node $\vtt_j$, there is a natural Lyapunov function candidate,\vspace{-2mm}
\begin{equation}\label{eq:lyap_cand_child}
V_{\vtt_j}(\y_j,\yd_j) = \frac{1}{2}\,\yd_j^\t\,\Gb_{\vtt_j}(\y_j,\yd_j)\,\yd_j + \Phi_{\vtt_j}(\y_j),
\end{equation}
and an associated natural-formed RMP $[\f_{\vtt_j}, \M_{\vtt_j} ]^{\NN_j}$, where $\f_{\vtt_j}$ is the force policy, and $\Mb_{\vtt_j}$ is defined by $\G_{\vtt_j}$ as in~(\ref{eq:M}). We shall further assume that $\M_{\vtt_j}$ is locally Lipschitz continuous for the ease of later analysis.
By construction of the \algebra, these Lyapunov function candidates and RMPs of the child nodes $\{\vtt_j\}_{j=1}^N$ define, for the parent node $\utt$,
a Lyapunov function candidate
\begin{equation}\label{eq:lyap_parent}
V_\utt(\x,\xd) = \frac{1}{2}\,\x^\t\,\Gb_\utt(\x,\xd)\,\xd + \Phi_\utt(\x).
\end{equation}
where $\Gb_\utt$ and $\Phi_\utt$ are given in~\eqref{eq:mdp}.

The following lemma states how the decay-rate of $V_\utt$ is connected to the decay-rates of $\{V_{\vtt_j}\}_{j=1}^N$ via \pullback.
\begin{lemma}\label{lm:induction_general}
For each child node $\vtt_j$, assume that $\fb_{\vtt_j} = \Mb_{\vtt_j}\,\ydd_j$ renders $\dot{V}_{\vtt_j}(\y_j,\yd_j) = -U_{\vtt_j}(\y_j,\yd_j)$ for $V_{\vtt_j}$ in~(\ref{eq:lyap_cand_child}).
If the parent node $\utt$ follows dynamics $\fb_\utt = \Mb_\utt(\x, \xd)\,\xdd$, where $\fb_\utt$ and $\Mb_\utt$ are given by \pullback,  
then $\dot{V}_\utt(\x,\xd) = -\sum_{j=1}^N U_{\vtt_j}(\y_j,\yd_j)$ for 
$V_\utt$ in~(\ref{eq:lyap_parent}).
\end{lemma}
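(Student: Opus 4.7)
The strategy rests on two structural observations. First, the recursive rules in~\eqref{eq:mdp}, combined with $\yd_j = \J_{\ett_j}(\x)\,\xd$, give the exact decomposition
\begin{equation*}
V_\utt(\x,\xd) \;=\; \sum_{j=1}^N V_{\vtt_j}(\y_j,\yd_j),
\end{equation*}
since $\tfrac12\,\xd^\t\Gb_\utt\,\xd = \tfrac12\sum_j\yd_j^\t\,\Gb_{\vtt_j}\,\yd_j$ by the pullback of $\Gb_\utt$, and $\Phi_\utt = \sum_j\Phi_{\vtt_j}\circ\psi_{\ett_j}$. Differentiating in time, the chain rule gives $\dot V_\utt = \sum_j \dot V_{\vtt_j}$ along the pulled-back curve. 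Second, the \pullback formulas are equivalent to the algebraic identity
\begin{equation*}
\sum_{j=1}^N \J_{\ett_j}^\t\bigl(\Mb_{\vtt_j}\,\ydd_j - \fb_{\vtt_j}\bigr) \;=\; 0 \quad\text{whenever } \fb_\utt = \Mb_\utt\,\xdd,
\end{equation*}
obtained by substituting $\ydd_j = \J_{\ett_j}\,\xdd + \Jd_{\ett_j}\,\xd$ into the \pullback expressions for $\fb_\utt$ and $\Mb_\utt$. These two observations will meet at a single linear-in-$\ydd_j$ expression for $\dot V_{\vtt_j}$.

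The bridge is the following \emph{kinematic identity}, which I would establish as an auxiliary lemma and which holds along any smooth trajectory independent of dynamics:
\begin{equation*}
\dot V_{\vtt_j}(\y_j,\yd_j) \;=\; \yd_j^\t\bigl(\Mb_{\vtt_j}\,\ydd_j + \bm\xi_{\Gb_{\vtt_j}} + \nabla_{\y_j}\Phi_{\vtt_j}\bigr).
\end{equation*}
To prove it I would differentiate~\eqref{eq:lyap_cand_child} directly, split $\dot{\Gb}_{\vtt_j}$ into its $\y_j$- and $\yd_j$-driven parts, and verify two things: (i)~the $\ydd_j$-linear terms collect into $\yd_j^\t(\Gb_{\vtt_j} + \bm\Xi_{\Gb_{\vtt_j}})\,\ydd_j = \yd_j^\t\,\Mb_{\vtt_j}\,\ydd_j$ via~\eqref{eq:M}; and (ii)~the cubic-in-$\yd_j$ remainder equals $\yd_j^\t\,\bm\xi_{\Gb_{\vtt_j}}$ via~\eqref{eq:curvatures}, after a short index relabeling that uses the symmetry of $\Gb_{\vtt_j}$.

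Granted the identity, the hypothesis $\dot V_{\vtt_j} = -U_{\vtt_j}$ under $\fb_{\vtt_j} = \Mb_{\vtt_j}\,\ydd_j$ becomes the pointwise equality $\yd_j^\t(\fb_{\vtt_j} + \bm\xi_{\Gb_{\vtt_j}} + \nabla_{\y_j}\Phi_{\vtt_j}) = -U_{\vtt_j}(\y_j,\yd_j)$, because every term on the left depends only on $(\y_j,\yd_j)$. Subtracting this pointwise equation from the kinematic identity evaluated on the actual pulled-back trajectory gives $\dot V_{\vtt_j} = -U_{\vtt_j} + \yd_j^\t(\Mb_{\vtt_j}\,\ydd_j - \fb_{\vtt_j})$, and summing over children while using $\yd_j = \J_{\ett_j}\xd$ yields
\begin{equation*}
\dot V_\utt \;=\; -\sum_{j=1}^N U_{\vtt_j}(\y_j,\yd_j) + \xd^\t\!\sum_{j=1}^N \J_{\ett_j}^\t\bigl(\Mb_{\vtt_j}\,\ydd_j - \fb_{\vtt_j}\bigr),
\end{equation*}
in which the second term vanishes by the algebraic identity above. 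This delivers the claim.

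The main obstacle is the kinematic identity itself: when $\Gb_{\vtt_j}$ depends on $\yd_j$, $\dot{\Gb}_{\vtt_j}$ acquires an $\ydd_j$-linear contribution, and it is only because $\bm\Xi_{\Gb_{\vtt_j}}$ in~\eqref{eq:M} is exactly the correction that upgrades $\Gb_{\vtt_j}\yd_j$ into the ``correct momentum'' $\Mb_{\vtt_j}\yd_j$ that these pieces recombine cleanly. Verifying the recombination is a short but index-heavy calculation; once it is in hand, the remainder of the proof is routine chain rule and linear algebra.
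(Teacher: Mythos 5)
Your proposal is correct and follows essentially the same route as the paper: both rest on the decomposition $V_\utt = \sum_j V_{\vtt_j}$, the differentiation identity $\dot V_{\vtt_j} = \yd_j^\t(\Mb_{\vtt_j}\ydd_j + \bm\xi_{\Gb_{\vtt_j}} + \nabla_{\y_j}\Phi_{\vtt_j})$ (the paper writes the curvature contribution as $\tfrac12\yd_j^\t\sdot{\Gb_{\vtt_j}}{\y_j}\yd_j$, which equals $\yd_j^\t\bm\xi_{\Gb_{\vtt_j}}$), the \pullback identity $\sum_j \J_{\ett_j}^\t\Mb_{\vtt_j}\ydd_j = \sum_j \J_{\ett_j}^\t\fb_{\vtt_j}$, and the pointwise reading of the child hypothesis. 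Your packaging of the argument into a trajectory-independent ``kinematic identity'' plus an ``algebraic identity'' is only a reorganization of the same computation.
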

\begin{proof}
For notational convenience, we suppress the arguments of functions.
First, note that,
\begin{equation}\small
    \begin{split}
        V_\utt &= \frac{1}{2}\,\sum_{j=1}^N \x^\t\, \J_{\ett_j}^\t\,\Gb_{\vtt_j}\,\J_{\ett_j}\,\xd + \sum_{j=1}^N\,\Phi_{\vtt_j} = \sum_{j=1}^N\, V_{\vtt_j}.
    \end{split}
\end{equation}
As $\Gb_{\vtt_j}$ is a function in both $\y$ and $\yd$, following a similar derivation as in~\cite{cheng2018rmpflow}, we can show
\begin{equation}\small
    \begin{split}
        &\dot{V}_\utt = \sum_{j=1}^N\,\yd_j^\t\,\Mb_{\vtt_j}\,\ydd_j + \frac{1}{2}\,\yd_j^\t\sdot{\Gb_{\vtt_j}}{\y_j}\,\yd_j + \yd_j^\t\,\nabla_{\y_j}\Phi_{\vtt_j},
    \end{split}
\end{equation}
where $\Mb_{\vtt_j}$ and $\sdot{\Gb_{\vtt_j}}{\y_j}$ are the inertial matrix and the curvature term defined in~\cref{sec:rmp}.

Note that $\ydd_j = \J_{\ett_j}\xdd + \Jd_{\ett_j}\xd$, where $\xdd$ is given by the RMP $[\fb_\utt, \Mb_\utt]^\MM$. Hence, the first term can be rewritten as
\begin{equation*}\small
    \begin{split}
        &\sum_{j=1}^N\,\yd_j^\t\,\Mb_{\vtt_j}\,\ydd_j = \xd^\t\,\left(\sum_{j=1}^N \J_{\ett_j}^\t\,\Mb_{\vtt_j}\,\left(\J_{\ett_j}\,\xdd + \Jd_{\ett_j}\,\xd\right)\right)\\[-0.2em]
        &= \xd^\t\,\fb_\utt + \xd^\t\,\left(\sum_{j=1}^N \J_{\ett_j}^\t\,\Mb_{\vtt_j}\,\Jd_{\ett_j}\,\xd\right)\\[-0.2em]
        &= \xd^\t\,\left(\sum_{j=1}^N\,\J_{\ett_j}^\t\,(\f_{\vtt_j} - \M_{\vtt_j}\,\Jd_{\ett_j}\,\xd)\right)
        + \xd^\t\,\left(\sum_{j=1}^N \J_{\ett_j}^\t\,\Mb_{\vtt_j}\,\Jd_{\ett_j}\,\xd\right)\\[-0.2em]
        &= \xd^\t\,\left(\sum_{j=1}^N\,\J_{\ett_j}^\t\,\fb_{\vtt_j}\right)= \sum_{j=1}^N\,\yd_j^\t\fb_{\vtt_j}.
    \end{split}
\end{equation*}
The time-derivative of $V_{\utt}$ can then be simplified as
\begin{equation}\small
    \begin{split}
        \dot{V}_\utt &= \sum_{j=1}^N\,\yd_j^\t\,\fb_{\vtt_j} + \frac{1}{2}\,\yd_j^\t\sdot{\Gb_{\vtt_j}}{\y_j}\,\yd_j + \yd_j^\t\,\nabla_{\y_j}\Phi_{\vtt_j}.
    \end{split}
\end{equation}
By assumption on ${\vtt}_j$, we also have
\begin{equation}\small
    \yd_j^\t\,\fb_{\vtt_j} + \frac{1}{2}\,\yd_j^\t\sdot{\Gb_{\vtt_j}}{\y_j}\,\yd_j + \yd_j^\t\,\nabla_{\y_j}\Phi_{\vtt_j} = -U_{\vtt_j}(\y_j,\yd_j).
\end{equation}
The statement follows then from the two equalities.
\end{proof}

We can use \cref{lm:induction_general} to infer the overall stability of \flow.
For an \tree with $K$ leaf nodes, let leaf node $\ltt_k$ be defined on task space $\TT_k$ with coordinates $\z_k$
and has a natural Lyapunov function candidate
\begin{equation}\label{eq:lyap_cand_leaf}
    V_{\ltt_k}(\z_k,\zd_k) = \frac{1}{2}\zd_k^\t\,\Gb_{\ltt_k}(\z_k,\zd_k)\,\zd_k + \Phi_{\ltt_k}(\z_k).
\end{equation}
for some potential function $\Phi_{\ltt_k}$ and positive-definite Riemannian metric $\G_{\ltt_k}$ defined as above.
By~\cref{lm:induction_general}, if each leaf node $\ltt_k$ satisfies a CLF constraint,
\begin{equation}\label{eq:lyap_decay_leaf_general}
    \dot{V}_{\ltt_k}(\z_k,\zd_k)=-U_{\ltt_k}\,(\z_k,\zd_k),
\end{equation}
then a similar constraint is satisfied by the root node. This observation is summarized below without proof.

\begin{proposition}\label{pr:lyap_decay_general}
For each leaf node $\ltt_k$, assume that $\fb_{\ltt_k} = \Mb_{\ltt_k}\,\zdd_k$ renders (\ref{eq:lyap_decay_leaf_general}) for 
~(\ref{eq:lyap_cand_leaf}). Consider the Lyapunov function candidate at the root node $V_\rtt(\q,\qd)$ defined through~(\ref{eq:lyap_parent}). Then, for the root node control policy of \flow, it holds
$\dot{V}_{\rtt}(\q,\qd)=-\sum_{k=1}^{K}\,U_{\ltt_k}\,(\psi_{\rtt\to\ltt_k}(\q),\J_{\rtt\to\ltt_k}\qd),$
where $\psi_{\rtt\to\ltt_k}$ is the map from  $\CC$ to  $\TT_k$, which can be obtained through composing maps from the root node $\rtt$ to the leaf node $\ltt_k$ on the \tree, and $\J_{\rtt\to\ltt_k}=\partial_\q \psi_{\rtt\to\ltt_k}$.
\end{proposition}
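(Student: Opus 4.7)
The plan is to derive the proposition as a direct corollary of \cref{lm:induction_general} by induction on the structure of the \tree, propagating the Lyapunov decay-rate identity from the leaves to the root. Before inducting, I would first verify that the Lyapunov function candidate $V_\utt$ in~\eqref{eq:lyap_parent} and the recursively-defined quantities in~\eqref{eq:mdp} are consistent with the hypotheses of the lemma at every internal node, which is immediate from the way $\Gb_\utt$ and $\Phi_\utt$ are built up by \pullback along the edges.

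I would then induct on the height of a node in the \tree. The base case is given: each leaf $\ltt_k$ has $\dot V_{\ltt_k} = -U_{\ltt_k}(\z_k,\zd_k)$ by the hypothesis~\eqref{eq:lyap_decay_leaf_general}. For the inductive step, consider an internal node $\utt$ whose children $\{\vtt_j\}_{j=1}^N$ each satisfy $\dot V_{\vtt_j} = -W_{\vtt_j}$, where $W_{\vtt_j}$ is (by the inductive hypothesis) the sum $\sum_{\ltt_k \in \mathrm{subtree}(\vtt_j)} U_{\ltt_k}\bigl(\psi_{\vtt_j \to \ltt_k}(\y_j),\, \J_{\vtt_j \to \ltt_k}\yd_j\bigr)$. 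Applying \cref{lm:induction_general} at $\utt$ with $U_{\vtt_j}:=W_{\vtt_j}$ yields $\dot V_\utt = -\sum_j W_{\vtt_j}$, which, after substituting the composed maps $\psi_{\utt\to\ltt_k}=\psi_{\vtt_j\to\ltt_k}\circ \psi_{\ett_j}$ and Jacobians $\J_{\utt\to\ltt_k}=\J_{\vtt_j\to\ltt_k}\J_{\ett_j}$ via the chain rule, is the desired sum over all leaves in $\mathrm{subtree}(\utt)$.

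Applying the inductive conclusion at the root $\rtt$ then gives
\begin{equation*}
\dot{V}_{\rtt}(\q,\qd) \;=\; -\sum_{k=1}^{K}\, U_{\ltt_k}\bigl(\psi_{\rtt\to\ltt_k}(\q),\, \J_{\rtt\to\ltt_k}\qd\bigr),
\end{equation*}
where $(\psi_{\rtt\to\ltt_k},\J_{\rtt\to\ltt_k})$ are obtained by composing the edge maps and their Jacobians along the unique root-to-leaf path. It only remains to note that the dynamics at the root in \flow are exactly $\fb_\rtt = \Mb_\rtt \qdd$ (produced by \pullback followed by \resolve, with $\ub = \ab_\rtt$), so \cref{lm:induction_general} is indeed applicable at every level, including the root itself.

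The main obstacle, though not a deep one, is bookkeeping: verifying that the recursive definitions of $\Gb$, $\Phi$, $\psi$, and $\J$ in~\eqref{eq:mdp} compose correctly so that the pushforward state $(\y_j,\yd_j)=(\psi_{\ett_j}(\x),\J_{\ett_j}\xd)$ at each level yields the claimed $(\psi_{\rtt\to\ltt_k}(\q),\J_{\rtt\to\ltt_k}\qd)$ at the leaves. Once this compositional structure is in place, the proof is essentially a one-line invocation of \cref{lm:induction_general} at each internal node, and the proposition is obtained by assembling these invocations from leaves to root.
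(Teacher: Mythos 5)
Your proof is correct and takes essentially the same route as the paper: the paper's (omitted) argument is precisely to apply \cref{lm:induction_general} recursively from the leaf nodes up to the root node. Your explicit induction on node height and the bookkeeping of composed maps $\psi_{\rtt\to\ltt_k}$ and Jacobians $\J_{\rtt\to\ltt_k}$ simply spell out what the paper leaves implicit.
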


Note that \cref{pr:lyap_decay_general} provides an alternative way to show the stability results of \flow.
\begin{corollary}\label{cor:lyap_decay_gds}
For each leaf node $\ltt_k$, assume that $\fb_{\ltt_k}$ is given by a GDS $(\TT_k,\G_{\ltt_k},\B_{\ltt_k},\Phi_{\ltt_k})$. Consider the Lyapunov function candidate at the root node $V_\rtt(\q,\qd)$ defined recursively through~(\ref{eq:lyap_parent}). Then we have,
$\dot{V}_\rtt(\q,\qd)=-\qd^\top\,\Bb_\rtt(\q,\qd)\,\qd$
under the resulting control policies from \flow, where $\Bb_\rtt$ is defined recursively through~(\ref{eq:mdp}).
\end{corollary}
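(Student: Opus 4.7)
The plan is to reduce the corollary to a direct application of Proposition~\ref{pr:lyap_decay_general} by (i) computing the decay function $U_{\ltt_k}$ specifically for a GDS leaf, and (ii) recognizing that the recursive formula~\eqref{eq:mdp} for the damping matrix telescopes into a global sum over leaves. Under these two reductions, everything collapses into the claimed identity.

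First, I would fix a leaf node $\ltt_k$ on task space $\TT_k$ with coordinate $\z_k$ whose force policy is generated by the GDS $(\TT_k,\G_{\ltt_k},\B_{\ltt_k},\Phi_{\ltt_k})$, i.e.\ $\fb_{\ltt_k} = -\bm\xi_{\G_{\ltt_k}} - \nabla_{\z_k}\Phi_{\ltt_k} - \B_{\ltt_k}\,\zd_k$ (reading off~\eqref{eq:GDS}). Applying the intermediate expression for $\dot V$ derived in the proof of \cref{lm:induction_general} to this leaf (which is the degenerate single-node case of that lemma), one obtains
\begin{equation*}
\dot V_{\ltt_k} \;=\; \zd_k^\t\,\fb_{\ltt_k} + \tfrac{1}{2}\,\zd_k^\t\,\sdot{\Gb_{\ltt_k}}{\z_k}\,\zd_k + \zd_k^\t\,\nabla_{\z_k}\Phi_{\ltt_k}.
\end{equation*}
The potential-gradient terms cancel after substituting the GDS form of $\fb_{\ltt_k}$, and then the curvature-term identity~\eqref{eq:curvatures} (combined with the fact that $\zd_k^\t\,\nabla_{\z_k}(\zd_k^\t \Gb_{\ltt_k} \zd_k)$ contributes only terms that are killed against $\sdot{\Gb_{\ltt_k}}{\z_k}$ as in the GDS energy computation of~\cite{cheng2018rmpflow}) yields $\dot V_{\ltt_k} = -\zd_k^\t\,\B_{\ltt_k}\,\zd_k$. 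In the notation of Proposition~\ref{pr:lyap_decay_general}, this identifies $U_{\ltt_k}(\z_k,\zd_k) = \zd_k^\t\,\B_{\ltt_k}\,\zd_k$.

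Second, I would invoke Proposition~\ref{pr:lyap_decay_general} directly to get $\dot V_\rtt(\q,\qd) = -\sum_{k=1}^K \zd_k^\t\,\B_{\ltt_k}\,\zd_k$ evaluated at $\z_k = \psi_{\rtt\to\ltt_k}(\q)$ and $\zd_k = \J_{\rtt\to\ltt_k}\,\qd$. Substituting the latter, each summand becomes $\qd^\t\,\J_{\rtt\to\ltt_k}^\t\,\B_{\ltt_k}\,\J_{\rtt\to\ltt_k}\,\qd$, so
\begin{equation*}
\dot V_\rtt(\q,\qd) \;=\; -\qd^\t\,\Bigl(\sum_{k=1}^K \J_{\rtt\to\ltt_k}^\t\,\B_{\ltt_k}\,\J_{\rtt\to\ltt_k}\Bigr)\,\qd.
\end{equation*}
It then remains to show that the matrix in parentheses equals $\B_\rtt$ as defined recursively by~\eqref{eq:mdp}. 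This follows by a straightforward induction on the depth of the \tree: the base case is a leaf (trivial), and for a non-leaf node $\utt$ with children $\{\vtt_j\}$ the inductive hypothesis gives $\B_{\vtt_j} = \sum_{\ltt_k \in \text{leaves}(\vtt_j)} \J_{\vtt_j\to\ltt_k}^\t\,\B_{\ltt_k}\,\J_{\vtt_j\to\ltt_k}$, after which the chain rule $\J_{\utt\to\ltt_k} = \J_{\vtt_j\to\ltt_k}\,\J_{\ett_j}$ and the relation $\B_\utt = \sum_j \J_{\ett_j}^\t\,\B_{\vtt_j}\,\J_{\ett_j}$ from~\eqref{eq:mdp} close the induction. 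The main obstacle is purely the curvature-term bookkeeping in the first step, but this is essentially the same calculation that underlies the GDS energy identity, so no new machinery is required.
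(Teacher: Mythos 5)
Your proposal is correct and follows exactly the route the paper intends: the corollary is stated as a direct consequence of \cref{pr:lyap_decay_general}, with the GDS energy identity $\dot V_{\ltt_k}=-\zd_k^\t\,\B_{\ltt_k}\,\zd_k$ supplying $U_{\ltt_k}$ and the recursive definition of $\B_\rtt$ in~\eqref{eq:mdp} telescoping into $\sum_k \J_{\rtt\to\ltt_k}^\t\,\B_{\ltt_k}\,\J_{\rtt\to\ltt_k}$ via the chain rule. Your curvature cancellation (i.e.\ $\zd_k^\t\,\bm\xi_{\G_{\ltt_k}}=\tfrac12\,\zd_k^\t\,\sdot{\Gb_{\ltt_k}}{\z_k}\,\zd_k$) is the same computation used in the GDS stability analysis of~\cite{cheng2018rmpflow}, so no gap remains.
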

\noindent With \cref{cor:lyap_decay_gds}, we then can show \cref{thm:stability} by invoking LaSalle's invariance principle~\cite{khalil1996noninear}.

\subsection{Global Stability Properties}

More importantly, by \cref{pr:lyap_decay_general}, we can find how CLF constraints are propagated from the leaf nodes to the root node through \pullback.

\begin{proposition}\label{cor:lyap_decay_clf}
For each leaf node $\ltt_k$, assume that $\fb_{\ltt_k}=\Mb_{\ltt_k}\zdd_{\ltt_k}$ renders, for ${V}_{\ltt_k}$ in \eqref{eq:lyap_cand_leaf},
\begin{equation}\label{eq:lyap_decay_clf_leaf}
    \dot{V}_{\ltt_k}(\z_k,\zd_k)\leq-\alpha_{k}\,(\|\zd_k\|),
\end{equation}
where $\alpha_{k}$ is a locally Lipschitz continuous class $\KK$ functions~\cite{khalil1996noninear}. Consider the Lyapunov function candidate at the root node $V_\rtt(\q,\qd)$ defined recursively through~(\ref{eq:lyap_parent}). Then
\begin{equation}\label{eq:lyap_decay_clf}
    \dot{V}_\rtt(\q,\qd)\leq -\sum_{k=1}^{K}\,\alpha_{k}\,(\|\J_{\rtt\to\ltt_k}\,\qd\|)
\end{equation}
under the resulting control policies from \flow.
\end{proposition}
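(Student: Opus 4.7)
The plan is to reduce this to Lemma~\ref{lm:induction_general} by noticing that an inequality version of that lemma holds almost for free, and then to cascade it along the \tree from the root down to the leaves.

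First I would revisit the proof of Lemma~\ref{lm:induction_general}. The chain of equalities there, up until the last step, never uses the assumption $\dot V_{\vtt_j} = -U_{\vtt_j}$; what it actually establishes, purely from the \pullback identities $\fb_\utt = \sum_j \J_{\ett_j}^\t(\fb_{\vtt_j} - \Mb_{\vtt_j}\Jd_{\ett_j}\xd)$ and $\ydd_j = \J_{\ett_j}\xdd + \Jd_{\ett_j}\xd$, is the cleaner identity
\begin{equation*}
\dot V_\utt(\x,\xd) \;=\; \sum_{j=1}^N \dot V_{\vtt_j}(\y_j,\yd_j)
\end{equation*}
whenever the parent follows $\fb_\utt = \Mb_\utt \xdd$ and each child's $\dot V_{\vtt_j}$ is evaluated along the induced dynamics $\fb_{\vtt_j} = \Mb_{\vtt_j}\ydd_j$. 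Consequently, if each child obeys a CLF-type inequality $\dot V_{\vtt_j} \leq -W_{\vtt_j}$, then the parent inherits $\dot V_\utt \leq -\sum_{j=1}^N W_{\vtt_j}$.

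Next I would apply this additive inequality recursively along the \tree. \flow defines every non-leaf node's RMP through \pullback from its children, so the hypothesis $\fb_\utt = \Mb_\utt\xdd$ holds at every internal node. Iterating the inequality from the root $\rtt$ down to the $K$ leaves yields
\begin{equation*}
\dot V_\rtt(\q,\qd) \;\leq\; \sum_{k=1}^{K} \dot V_{\ltt_k}(\z_k,\zd_k).
\end{equation*}
Plugging in the leaf-level CLF constraints \eqref{eq:lyap_decay_clf_leaf} gives $\dot V_\rtt \leq -\sum_k \alpha_k(\|\zd_k\|)$.

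Finally I would express $\zd_k$ in terms of $\qd$. Applying \pushforward repeatedly along the path from $\rtt$ to $\ltt_k$ and invoking the chain rule shows that the composite Jacobian satisfies $\J_{\rtt\to\ltt_k} = \partial_\q \psi_{\rtt\to\ltt_k}$ and $\zd_k = \J_{\rtt\to\ltt_k}\,\qd$, which delivers the claimed bound~\eqref{eq:lyap_decay_clf}. The main obstacle here is really just the first step: one has to notice that the computation inside Lemma~\ref{lm:induction_general} already proves the sharper equality $\dot V_\utt = \sum_j \dot V_{\vtt_j}$, so that the passage from equalities to class-$\KK$ inequalities costs nothing and the recursion goes through unchanged.
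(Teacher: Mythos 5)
Your proposal is correct and follows essentially the same route as the paper, which states this proposition without a separate proof precisely because it is the inequality version of \cref{pr:lyap_decay_general}: the computation in \cref{lm:induction_general} establishes the additive identity $\dot V_\utt = \sum_j \dot V_{\vtt_j}$ independently of the sign of the decay terms, so the class-$\KK$ bounds at the leaves propagate to the root unchanged. Your observation that the passage from equalities to inequalities costs nothing, together with the \pushforward{} identity $\zd_k = \J_{\rtt\to\ltt_k}\qd$, is exactly the intended argument.
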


With this insight, we  state a new stability theorem of \flow by applying LaSalle's invariance principle. We assume that the inertia matrix at the root node $\M_\rtt$ is nonsingular for simplicity, so that the actual control input can be solved through the \resolve operation; a sufficient condition for $\M_\rtt$ being nonsingular is provided in~\cite{cheng2018rmpflow}.

\begin{theorem}\label{thm:clf_rmpflow_stability}
For each leaf node $\ltt_k$, assume that $\fb_{\ltt_k}=\Mb_{\ltt_k}\zdd_{\ltt_k}$ renders~\eqref{eq:lyap_decay_clf_leaf}.
Suppose that $\M_\rtt$ is nonsingular, and
the task space $\TT$ is an immersion of the configuration space $\CC$.
Then the control policy generated by \flow renders the system converging to the forward invariant set
\begin{equation}\label{eq:forward_inv_set}
    \CC_\infty \coloneqq \left\{(\q,\qd) : \qd = 0,\,\, \sum_{j=1}^{K} \J_{\rtt\to\ltt_k}^\t\, \f_{\ltt_k}=0 \right\}.
\end{equation}

\noindent
Further if, for all leaf nodes, 
$\f_{\ltt_k}=-\nabla_{\z_{k}}\Phi_{\ltt_k}(\z_{k})$ when $\zd_{k}=0$, the system converges to the forward invariant set
\begin{equation}\label{eq:forward_inv_set_phi}
    \CC_\infty^\Phi \coloneqq \{(\q,\qd) : \nabla_\q \Phi_\rtt(\q) = 0, \qd = 0 \},
\end{equation}
where $\Phi_\rtt$ is the potential in $V_\rtt$ defined recursively in~(\ref{eq:mdp}).
\end{theorem}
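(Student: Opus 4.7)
The plan is to apply Proposition 3.3 to obtain a CLF inequality at the root node, then close the argument using LaSalle's invariance principle; the main technical effort is an induction on tree depth that collapses the \pullback recursion at zero velocity to identify the invariant set explicitly.

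First I would invoke Proposition 3.3: since each leaf satisfies~\eqref{eq:lyap_decay_clf_leaf}, the closed-loop system under \flow obeys
\[
\dot{V}_\rtt(\q,\qd) \;\leq\; -\sum_{k=1}^K \alpha_k\!\left(\|\J_{\rtt\to\ltt_k}\qd\|\right) \;\leq\; 0.
\]
Because each $\Phi_{\ltt_k}$ is lower bounded and the leaf metrics are positive definite, $\Phi_\rtt$ inherits a lower bound via~\eqref{eq:mdp} and the immersion hypothesis makes $\Gb_\rtt$ positive definite on the configuration space, so trajectories stay within sublevel sets of $V_\rtt$ on which LaSalle's invariance principle applies.

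Second, I would characterize the largest invariant set $\Omega \subseteq \{\dot{V}_\rtt = 0\}$. Vanishing of $\dot{V}_\rtt$ together with the class-$\KK$ property of each $\alpha_k$ forces $\J_{\rtt\to\ltt_k}\qd = 0$ for every $k$; the immersion assumption makes the stacked Jacobian injective in $\qd$, so $\qd \equiv 0$ on $\Omega$. The crux is then to turn invariance into the algebraic constraint in~\eqref{eq:forward_inv_set}. Since $\qd \equiv 0$ on $\Omega$, invariance also gives $\qdd \equiv 0$, and nonsingularity of $\M_\rtt$ together with the \resolve identity $\ub = \M_\rtt^{\dagger}\,\f_\rtt$ yields $\f_\rtt = 0$. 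I would then unwind the \pullback recursion at zero velocity: every correction term $\M_{\vtt_j}\Jd_{\ett_j}\xd$ carries an explicit factor of the parent velocity, and the chain rule propagates $\qd=0$ through the tree so that $\yd_j = \J_{\ett_j}\xd = 0$ at every internal node. An induction on the depth of the \tree therefore collapses the recursion to
\[
\f_\rtt \;=\; \sum_{k=1}^K \J_{\rtt\to\ltt_k}^{\t}\,\f_{\ltt_k},
\]
giving $\CC_\infty$ as in~\eqref{eq:forward_inv_set}; forward invariance follows from the construction via LaSalle. This depth induction is where I expect the main bookkeeping difficulty, since one must track that the curvature-like corrections at every level are purely velocity-dependent and vanish simultaneously.

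For the refinement to $\CC_\infty^\Phi$, I would substitute $\f_{\ltt_k} = -\nabla_{\z_k}\Phi_{\ltt_k}(\z_k)$ (valid on $\Omega$ since $\zd_k = 0$) into the collapsed identity above and apply the chain rule together with the recursive definition $\Phi_\rtt = \sum_k \Phi_{\ltt_k}\circ\psi_{\rtt\to\ltt_k}$ inherited from~\eqref{eq:mdp}. This gives
\[
\sum_{k=1}^K \J_{\rtt\to\ltt_k}^{\t}\,\f_{\ltt_k} \;=\; -\nabla_\q \Phi_\rtt(\q),
\]
so that the constraint defining $\CC_\infty$ specializes to $\nabla_\q\Phi_\rtt(\q)=0$, yielding~\eqref{eq:forward_inv_set_phi}. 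Apart from the depth induction in the previous paragraph, the remainder of the proof is a routine application of LaSalle.
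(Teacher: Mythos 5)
Your proposal is correct and follows essentially the same route as the paper's proof: LaSalle plus Proposition 3.3 to force $\qd=0$ via the immersion hypothesis, then nonsingularity of $\M_\rtt$ to get $\f_\rtt=0$, collapsing the \pullback recursion at zero velocity (the paper writes this composite identity in one line where you spell out the depth induction), and finally the chain rule on $\Phi_\rtt$ for the refined invariant set.
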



\begin{proof}

By assumption, $V_\rtt$ is proper, continuously differentiable and lower bounded. Hence, the system converges to the largest invariant set in $\{(\q,\qd):\dot{V}_\rtt(\q,\qd)=0\}$ by LaSalle's invariance principle~\cite{khalil1996noninear}. By \eqref{eq:lyap_decay_clf} in \cref{cor:lyap_decay_clf}, $\dot{V}_\rtt=0$ if and only if $\J_{\rtt\to\ltt_k}\,\qd=0$ for all $k=1,\ldots,K$. Since $\CC$ is immersed in $\TT$, we have $\qd=0$. Hence, the system converges to a forward invariant set $\CC_\infty\coloneqq\{(\q,\qd) : \qd= 0 \}$. Any forward invariant set in $\CC_\infty$ must have $\qdd = 0$, which implies that $\f_\rtt = 0$ as $\M_\rtt$ is nonsingular. Note that $\f_\rtt$ is given by the \pullback operation 
recursively, hence,
\begin{equation*}\small
    0=\f_\rtt = \sum_{k=1}^{K} \J_{\rtt\to\ltt_k}^\t\, (\f_{\ltt_k} - \M_{\ltt_k}\,\Jd_{\rtt\to\ltt_k}\,\qd) = \sum_{k=1}^{K} \J_{\rtt\to\ltt_k}^\t\, \f_{\ltt_k}
\end{equation*}
where the last equality follows from $\qd= 0$.
Thus, the system converges to the forward invariant set in~\eqref{eq:forward_inv_set}.

Now, assume that $\f_{\ltt_k}=-\nabla_{\z_k}\Phi_{\ltt_k}(\z_{k})$ when $\zd_{k}=0$ (which is implied by $\qd=0$). Notice that by the definition of $\Phi_\rtt$ in (\ref{eq:mdp}), $\Phi_{\rtt}(\q) = \sum_{k=1}^{K}\Phi_{\ltt_k}(\z_k)$. By the chain rule,
\begin{equation*}\small
        \nabla_\q\Phi_\rtt(\q) = \sum_{k=1}^{K} \J_{\rtt\to\ltt_k}^\t\, \nabla_{\z_{k}}\Phi_{\ltt_k}(\z_{k})= -\sum_{k=1}^{K} \J_{\rtt\to\ltt_k}^\t\, \f_{\ltt_k}.
\end{equation*}
Hence $\sum_{k=1}^{K} \J_{\rtt\to\ltt_k}^\t\, \f_{\ltt_k}=0$ implies $\nabla_\q\Phi_\rtt(\q)=0$. Thus, the system converges forwardly to \eqref{eq:forward_inv_set_phi}. 
\end{proof}

\cref{thm:clf_rmpflow_stability} implies that subtask controllers satisfying CLF constraints~\eqref{eq:lyap_decay_clf_leaf} can be stably combined by \flow. 



\section{A Computational Framework for \flow with CLF Constraints}\label{sec:computational_framework}

We introduce a computational framework for controller synthesis based on the stability results presented in~\cref{sec:generalized_stability}.
The main idea is to leverage \cref{cor:lyap_decay_clf}, which says that \flow is capable of preserving CLF constraints in certain form.
Recall that for leaf node $\vtt_j$, the constraint on the time-derivative of the Lyapunov function in \cref{cor:lyap_decay_clf} is
$\dot{V}_{\ltt_k}(\z_k,\zd_k)\leq-\alpha_{k}\,(\|\zd_k\|)$. Combined with the particular choice of leaf-node Lyapunov function candidate in~\eqref{eq:lyap_cand_leaf}, this yields a CLF constraint\footnote{This a linear constraint with respect to $\f_{\ltt_k}$. When $\zd_k=0$, 
the constraint \eqref{eq:leaf_node_constraints} holds trivially because both sides equal $0$.}
\begin{equation}\label{eq:leaf_node_constraints}
    \begin{split}
        \zd_k^\top\,\f_{\ltt_k}\,&\leq\,-\zd_k^\top\left(\nabla_{\z_k}\Phi_{\ltt_k}+\bm\xi_{\Gb_{\ltt_k}}\right) - \alpha_{k}(\|\zd_k\|),
    \end{split}
\end{equation}
where $\bm\xi_{\G_{\ltt_k}}$ is defined in \eqref{eq:curvatures}.
\cref{cor:lyap_decay_clf} shows that, when the leaf-node control policies satisfy \eqref{eq:leaf_node_constraints}, \flow will yield a stable controller under suitable conditions. This provides a constructive principle to synthesize controllers.

\subsection{Algorithm Details}

Assume that some nominal controller $\ub_{\ltt_k}^d$ is provided by the specification of subtask $k$. We design the leaf-node controller as a \emph{minimally invasive} controller that modifies the nominal controller as little
as possible while satisfying the CLF  constraint~\eqref{eq:leaf_node_constraints}:
\begin{align}\label{eq:qp}
\f_{\ltt_k}^* & = & & {\textstyle \argmin_{\f_{\ltt_k}} } \,\|\f_{\ltt_k} - \Mb_{\ltt_k}\,\ub_{\ltt_k}^d \|_{\Pb_{\ltt_k}}^2    \\
 & \mathrm{s.t.}
 & & \zd_k^\top\,\f_{\ltt_k}\,\leq\,-\zd_k^\top\left(\nabla_{\z_k}\Phi_{\ltt_k}+\bm\xi_{\Gb_{\ltt_k}}\right) - \alpha_{k}(\|\zd_k\|) \nonumber
\end{align}
where $\Pb_{\ltt_k}\succ 0$ and $\M_{\ltt_k}$ is given by $\G_{\ltt_k}$ through~\eqref{eq:M}.
Possible choices of $\Pb_{\ltt_k}$ include the identity matrix $\I$ and the inverse of the inertial matrix $\Mb_{\ltt_k}^{-1}$. In particular,  $\Pb_{\ltt_k}=\Mb_{\ltt_k}^{-1}$ yields an objective function equivalent to $\|\ab_{\ltt_k} - \ub_{\ltt_k}^d \|_{\Mb_{\ltt_k}}^2$, where $\ab_{\ltt_k}$ is the acceleration policy of the node. 

We combine this minimally invasive controller design with \flow as a new computational framework for controller synthesis, called \emph{\flow--CLF}. \flow--CLF follows the same procedure as the original \flow~\cite{cheng2018rmpflow} as is discussed in \cref{sec:rmp}. The difference is that the leaf nodes solve for the RMPs based on the optimization problem~\eqref{eq:qp} during the evaluation step.
Note that \eqref{eq:qp} is a QP problem with a single linear constraint, so it can be solved analytically by projecting $\Mb_{\ltt_k}\ub_{\ltt_k}^d$ \mbox{onto the half-plane given by the  constraint.}


\subsection{Stability Properties}

The form of~\eqref{eq:qp} together with \cref{thm:clf_rmpflow_stability} and the results of~\cite{morris2013sufficient} yields the following theorem:
\begin{theorem}\label{thm:clf_rmpflow_stability_comp}
Under the assumptions in \cref{thm:clf_rmpflow_stability}, if $\{\ub_{\ltt_k}^d\}_{k=1}^K$, $\{\Mb_{\ltt_k}\}_{k=1}^K$, all edge Jacobians and their derivatives are locally Lipschitz continuous, then the control policy generated by \flow--CLF is locally Lipschitz continuous and renders the system converging forwardly to~\eqref{eq:forward_inv_set}.
\end{theorem}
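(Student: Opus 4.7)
The plan is to split the claim into two pieces: (i) the root-level control policy $\ub = \Mb_\rtt^{-1}\fb_\rtt$ synthesized by \flow--CLF is locally Lipschitz continuous on $\CC \times \R^d$, and (ii) the closed-loop trajectories converge forwardly to $\CC_\infty$ in \eqref{eq:forward_inv_set}. Claim (ii) will follow almost directly from the earlier machinery once (i) is available: by construction, each leaf-node minimizer $\fb^*_{\ltt_k}$ of \eqref{eq:qp} is feasible for the CLF inequality \eqref{eq:leaf_node_constraints}, which is exactly hypothesis \eqref{eq:lyap_decay_clf_leaf} of \cref{cor:lyap_decay_clf}; invoking that corollary gives the root-level decay \eqref{eq:lyap_decay_clf}, and then \cref{thm:clf_rmpflow_stability} applies verbatim (its standing assumptions on $\Mb_\rtt$ and $\CC \hookrightarrow \TT$ are inherited).

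For (i), I would first derive a closed form for the leaf QP. Since the cost in \eqref{eq:qp} is strongly convex in the $\Pb_{\ltt_k}$-norm and the problem carries a single affine inequality in $\fb_{\ltt_k}$, its unique KKT solution is the projection
\begin{equation*}
\fb^*_{\ltt_k} \;=\; \Mb_{\ltt_k}\ub^d_{\ltt_k} \;-\; \frac{[h_{\ltt_k}]_+}{\zd_k^\top \Pb_{\ltt_k}^{-1}\zd_k}\,\Pb_{\ltt_k}^{-1}\zd_k,
\end{equation*}
where $[\,\cdot\,]_+ = \max(0,\cdot)$ and $h_{\ltt_k}(\q,\qd) := \zd_k^\top \Mb_{\ltt_k}\ub^d_{\ltt_k} + \zd_k^\top(\nabla_{\z_k}\Phi_{\ltt_k} + \bm\xi_{\G_{\ltt_k}}) + \alpha_k(\|\zd_k\|)$ is the constraint slack at the unconstrained minimizer. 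Because $[\,\cdot\,]_+$ is globally $1$-Lipschitz, $\alpha_k$ is locally Lipschitz with $\alpha_k(0)=0$, and each factor ($\Mb_{\ltt_k}, \ub^d_{\ltt_k}, \Pb_{\ltt_k}, \nabla\Phi_{\ltt_k}, \bm\xi_{\G_{\ltt_k}}$) is locally Lipschitz in $(\q,\qd)$ by composition of the hypothesized locally Lipschitz edge maps and their Jacobians, both $h_{\ltt_k}$ and the vector $\Pb_{\ltt_k}^{-1}\zd_k$ are locally Lipschitz. The delicate point is the ratio $[h_{\ltt_k}]_+/(\zd_k^\top \Pb_{\ltt_k}^{-1}\zd_k)$ when $\zd_k \to 0$; I would handle this by observing that on $\{h_{\ltt_k} \le 0\}$ the projection reduces trivially to $\Mb_{\ltt_k}\ub^d_{\ltt_k}$, while on $\{h_{\ltt_k} > 0\}$ the Lipschitz estimate $h_{\ltt_k}(\q,\qd) = O(\|\zd_k\|)$ (which holds because $h_{\ltt_k}$ vanishes on $\{\zd_k=0\}$ and is locally Lipschitz in $\zd_k$) keeps the product with $\Pb_{\ltt_k}^{-1}\zd_k/(\zd_k^\top \Pb_{\ltt_k}^{-1}\zd_k)$ bounded and locally Lipschitz; this is essentially the pointwise min-norm Lipschitz argument of Morris et al.~\cite{morris2013sufficient} transported to the RMP setting.

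Having Lipschitz leaf policies, I would then propagate up the \tree: \pullback is an affine combination of the children's $\fb_{\vtt_j}$ and $\Mb_{\vtt_j}$ with coefficients built from $\J_{\ett_j}$, $\J_{\ett_j}^\t$, and $\Jd_{\ett_j}$, all locally Lipschitz by assumption, so $\fb_\rtt$ and $\Mb_\rtt$ are locally Lipschitz in $(\q,\qd)$. Since $\Mb_\rtt$ is assumed nonsingular, matrix inversion is smooth on a neighborhood of its image, so \resolve preserves local Lipschitz continuity and yields a locally Lipschitz $\ub$. Combining (i) and (ii) gives existence/uniqueness of the closed-loop trajectory and the convergence assertion, completing the proof.

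The main obstacle is precisely the boundary behavior of $\fb^*_{\ltt_k}$ near $\zd_k = 0$ where the constraint normal degenerates; the rest is bookkeeping. I expect the cleanest route is either the explicit case split above or a direct appeal to the Lipschitz result of~\cite{morris2013sufficient} (as the statement already credits), verifying only that the local Sontag-style regularity condition they require is implied here by local Lipschitz continuity of $\Mb_{\ltt_k}\ub^d_{\ltt_k}$, $\bm\xi_{\G_{\ltt_k}}$, $\nabla_{\z_k}\Phi_{\ltt_k}$, and $\alpha_k$.
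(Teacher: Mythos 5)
Your decomposition is exactly the paper's: the paper's proof is three lines long --- convergence to \eqref{eq:forward_inv_set} follows from \cref{thm:clf_rmpflow_stability} since the QP solution is feasible for \eqref{eq:leaf_node_constraints}, local Lipschitz continuity of each leaf policy $\fb_{\ltt_k}$ is delegated to \cite{morris2013sufficient}, and \pullback/\resolve preserve Lipschitz continuity under the stated assumptions on the Jacobians and $\Mb_\rtt$. Your write-up adds the explicit half-space projection formula and the propagation bookkeeping, which is a faithful expansion of the same route. One caution on the part you flag as delicate: the inference that $h_{\ltt_k}=O(\|\zd_k\|)$ keeps $[h_{\ltt_k}]_+\,\Pb_{\ltt_k}^{-1}\zd_k/(\zd_k^\top\Pb_{\ltt_k}^{-1}\zd_k)$ ``bounded and locally Lipschitz'' only delivers boundedness, not continuity --- in a one-dimensional example with $\alpha_k(s)=Ks$ and $\zd_k^\top(\Mb_{\ltt_k}\ub_{\ltt_k}^d+\nabla_{\z_k}\Phi_{\ltt_k}+\bm\xi_{\Gb_{\ltt_k}})=c\,\zd_k$ the correction term approaches $c+K$ as $\zd_k\to 0^+$ and $c-K$ as $\zd_k\to 0^-$, a genuine jump --- so the case split alone does not close the argument at $\zd_k=0$, and one really does need the additional regularity conditions of \cite{morris2013sufficient} there (as your fallback, and the paper itself, ultimately do). With that one step resting on the citation rather than on the $O(\|\zd_k\|)$ estimate, your proof and the paper's coincide.
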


\begin{proof}
By \cref{thm:clf_rmpflow_stability}, the system converges to~\eqref{eq:forward_inv_set}. By~\cite{morris2013sufficient}, for all $k\in\{1,\ldots, K\}$, $\f_{\ltt_k}$ is locally Lipschitz. Since under the assumption \pullback and \resolve preserves Lipschitz continuity; the statement follows.
\end{proof}




Note that \flow can be interpreted as a soft version of the QP--CLF formulation \cite{morris2013sufficient} that enforces the decay-rates of \emph{all} Lyapunov function candidates~\eqref{eq:qp_clf}. Meanwhile, compared with the QP--CLF framework with slack variables~\cite{ames2014control} that requires the users to design the objective function trade off between control specifications, \flow provides a structured way to implicitly generate such an objective function so that the system is \emph{always} stable.

It should be noted that the system can also be stabilized by directly enforcing a single constraint on the time derivative of the combined Lyapunov function candidate~\eqref{eq:lyap_decay_clf} at the root node, rather than enforcing the CLF constraint at every leaf node~\eqref{eq:lyap_decay_clf_leaf}. However, this can be less desirable: although the stability can be guaranteed for the resulting controller, the behavior of each individual subtask is no longer explicitly regulated. 
By contrast, the approach with leaf-node CLF constraints allows the users to design and test the controllers from~\eqref{eq:qp} independently. This allows for designing controllers that can be applied to robots with different kinematic structures, which is a significant feature of \flow~\cite{cheng2018rmpflow}.

\section{Experimental Results}\label{sec:results}

In this section, we compare the proposed \flow--CLF framework with the original \flow framework~\cite{cheng2018rmpflow}. A video of the experimental results can be found at \url{https://youtu.be/eU_x8Yklv-4}. 
The original \flow framework~\cite{cheng2018rmpflow} is referred to as \flow--GDS to differentiate it from \flow--CLF.

\subsection{Simulation Results}

We present two simulated examples, a $2$-dimensional goal reaching task and a multi-robot goal reaching example. 

\subsubsection{2D Goal Reaching}

We first consider the 2D goal reaching task presented in~\cite{cheng2018rmpflow}. In this example, a planar robot with double-integrator dynamics is expected to move to a goal position without colliding into a circular obstacle. As is in~\cite{cheng2018rmpflow}, the \tree has a collision avoidance leaf-node RMP and a goal attractor leaf-node RMP. For the \flow--CLF framework, we use the collision avoidance RMP in~\cite{cheng2018rmpflow} and keep the choice of metrics and potential functions for the goal attractor RMP consistent with~\cite{cheng2018rmpflow}. For the goal attractor RMP, we present several nominal controllers: (i) a pure potential-based nominal controller $\fb_{pt}^d = \M\ub_{pt}^d = -\nabla\Phi$; (ii) a spiral nominal controller $\fb_{sp}^d = -\nabla\Phi + \|\zd\|\,\vb$, where $\vb$ is the potential-based controller rotated by $\pi/2$, i.e. $\vb = -R(\pi/2)\,\nabla\Phi$ with $R(\cdot)$ being the rotation matrix; and (iii) a sinusoidal controller $\fb_{sn}^d = -\nabla\Phi + \sin(t/4)\,\|\zd\|\,\vb$. For the minimally invasive controller, we use $\Pb=\I$ to minimize the Euclidean distance between the nominal controller and the solution to the optimization problem~\eqref{eq:qp}. We implement the \flow--GDS framework with the same choice of parameters as~\cite{cheng2018rmpflow}.
The trajectories under different nominal controllers are shown in \cref{fig:attractor_2d}. Although it is possible that similar behaviors can be realized with the \flow--GDS framework with a careful redesign of the metric and potential function, the \flow--CLF framework can produce a rich class of behaviors without being concerned with the geometric properties of the subtask manifold.

\begin{figure}
    \centering
    \vspace{2mm}
    \begin{subfigure}{.61\columnwidth}
  \centering
  \resizebox{!}{1.2in}{\includegraphics{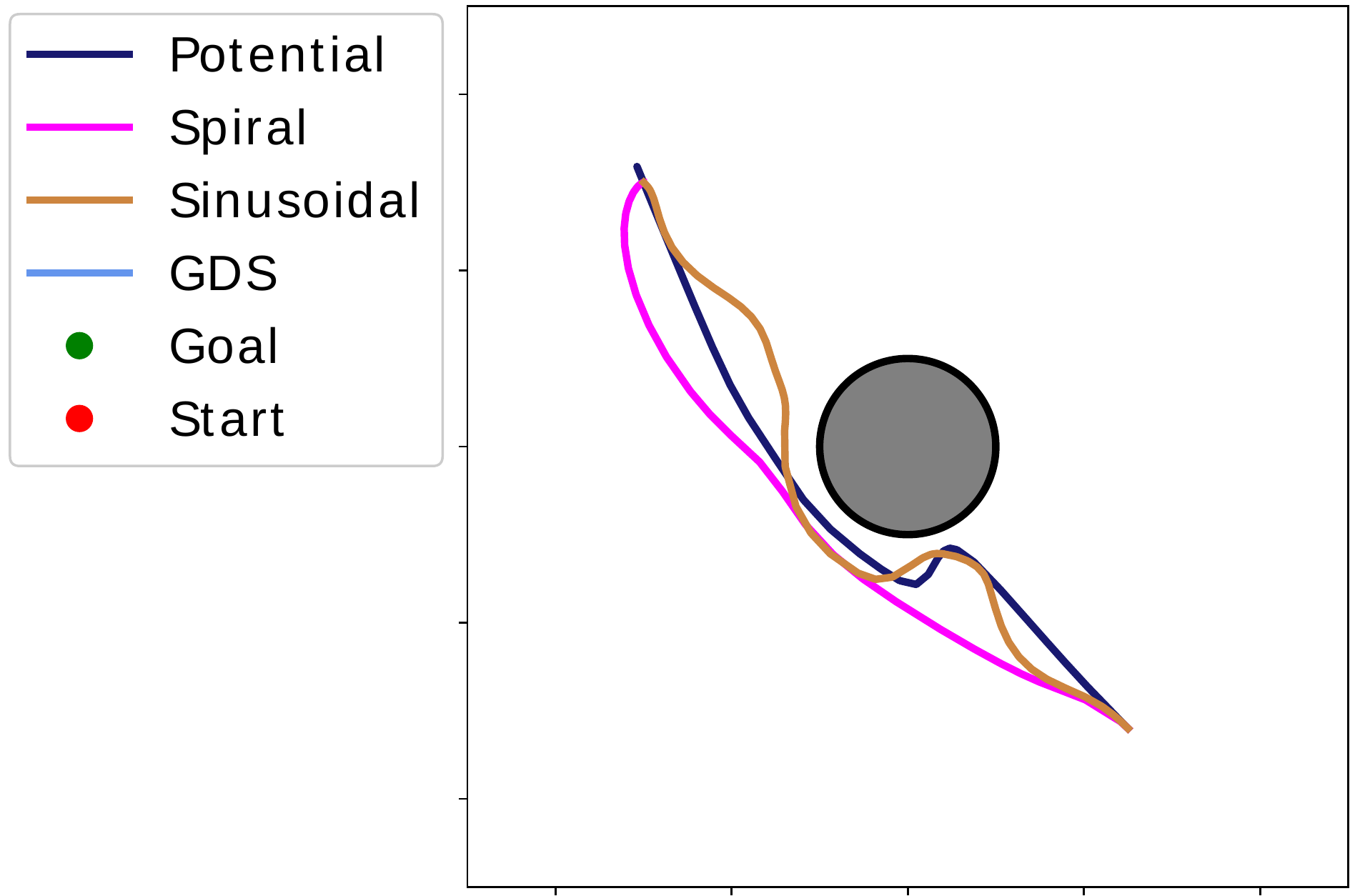}}
  \caption{\flow--CLF}
  \label{fig:attractor_2d_clf}
\end{subfigure}%
\begin{subfigure}{.39\columnwidth}
    \centering
  \resizebox{!}{1.2in}{\includegraphics{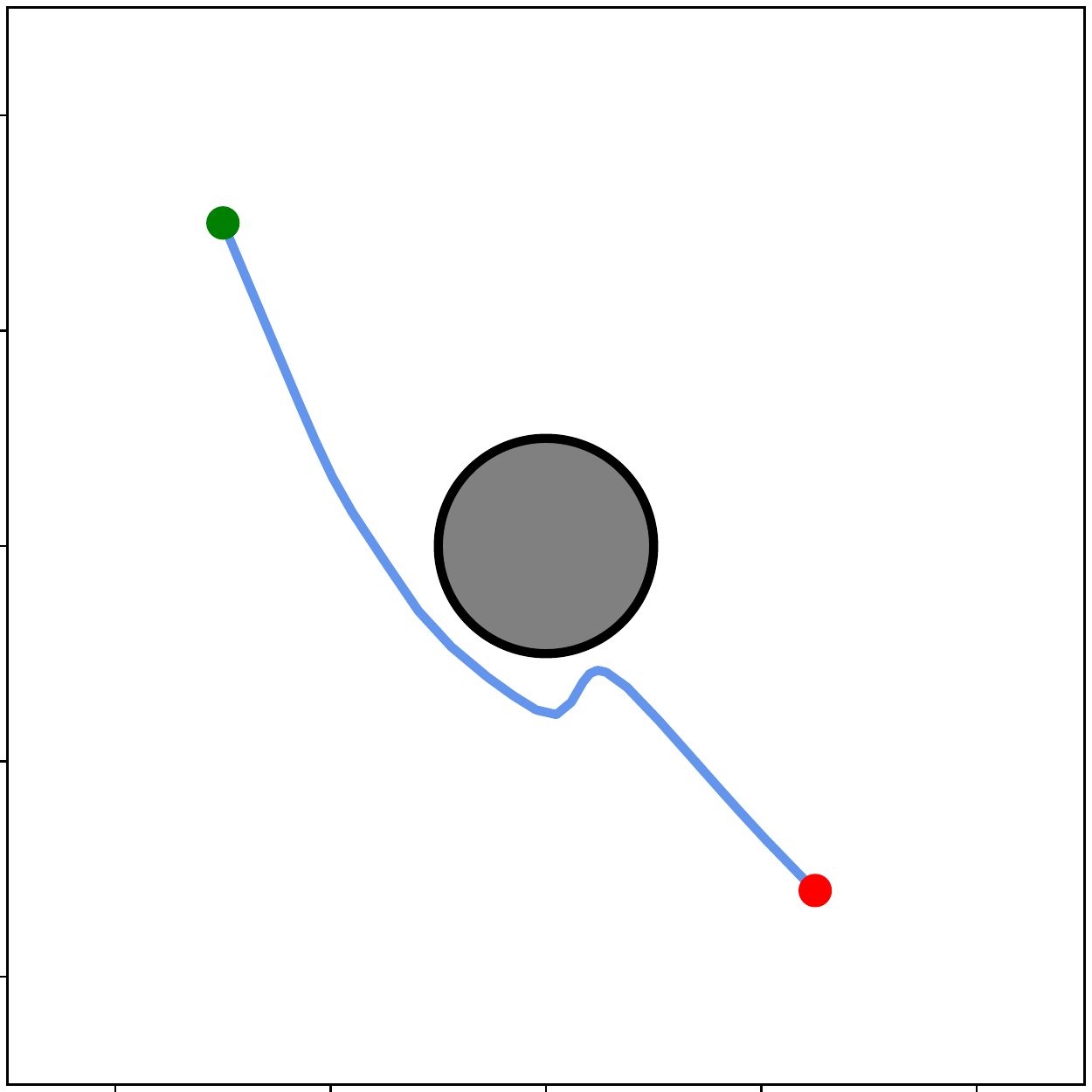}}
  \caption{\flow--GDS~\cite{cheng2018rmpflow}}
  \label{fig:attractor_2d_gds}
\end{subfigure}
	\caption{
	2D goal reaching task with a circular obstacle (grey). (a) \flow--CLF with three choices of nominal controllers, resulting in different goal reaching behaviors. (b) \flow--GDS with the goal attractor given by a GDS. The behavior is limited by the choice of the metric and the potential function. \vspace{-10mm}
	}
    \label{fig:attractor_2d}
\end{figure}

\subsubsection{Multi-Robot Goal Reaching}

 \flow--CLF guarantees system stability even when the nominal controllers are not inherently stable or asymptotically stable. Therefore, the user can incorporate design knowledge given by, e.g. motion planners, human demonstrations or even heuristics, into the nominal controllers. To illustrate this, we consider a multi-robot goal reaching task, where the robots are tasked with moving to the opposite side of the arena without colliding. If the robots move in straight lines, their trajectories would intersect near the center of the arena. Due to the symmetric configuration, the system can easily deadlock with robots moving very slowly or stopping near the center to avoid collisions. This problem can be fixed if the symmetry is broken. One possible solution is to design nominal controllers for the goal attractors so that the robots move along curves.

We compare the spiral goal attractor RMP with the GDS goal attractor RMP presented in~\cite{li2019multi}. In both cases, an \tree structure similar to the centralized \tree structure in~\cite{li2019multi} is used. We define collision avoidance for pairs of robots in the \tree with the same choice of parameters. The trajectories of the robots under the spiral nominal controllers are shown in~\cref{fig:multi_robot_clf}. The spiral controllers produce smooth motion, whereas the GDS goal attractors produce jerky motion when the robots are near the center due to deadlock caused by the symmetric configuration (\cref{fig:multi_robot_gds}).

\begin{figure}
    \centering
    \vspace{2mm}
    \begin{subfigure}{.61\columnwidth}
  \centering
  \resizebox{!}{1.2in}{\includegraphics{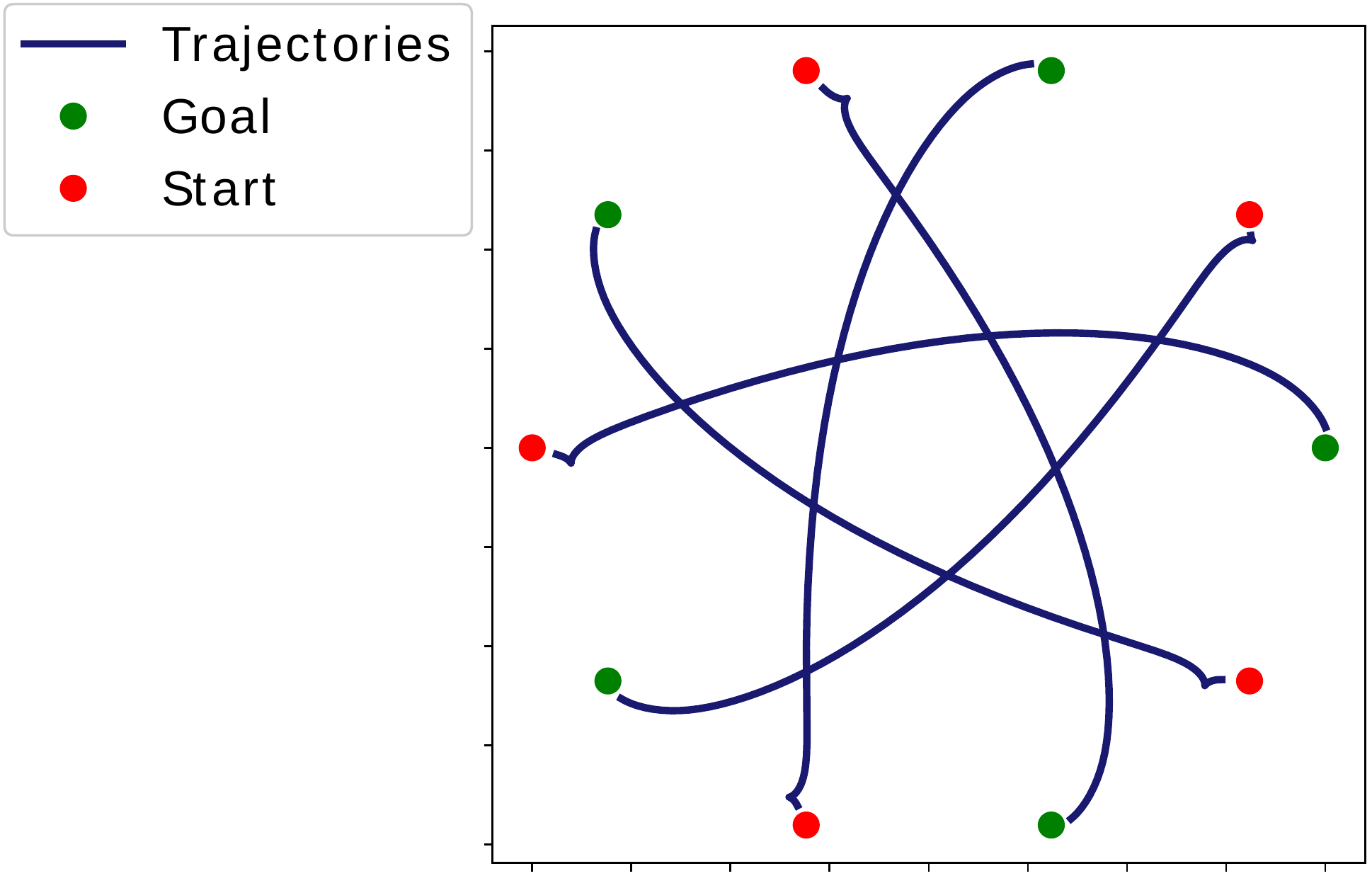}}
  \caption{\flow--CLF}
  \label{fig:multi_robot_clf}
\end{subfigure}%
\begin{subfigure}{.39\columnwidth}
  \centering
  \resizebox{!}{1.2in}{\includegraphics{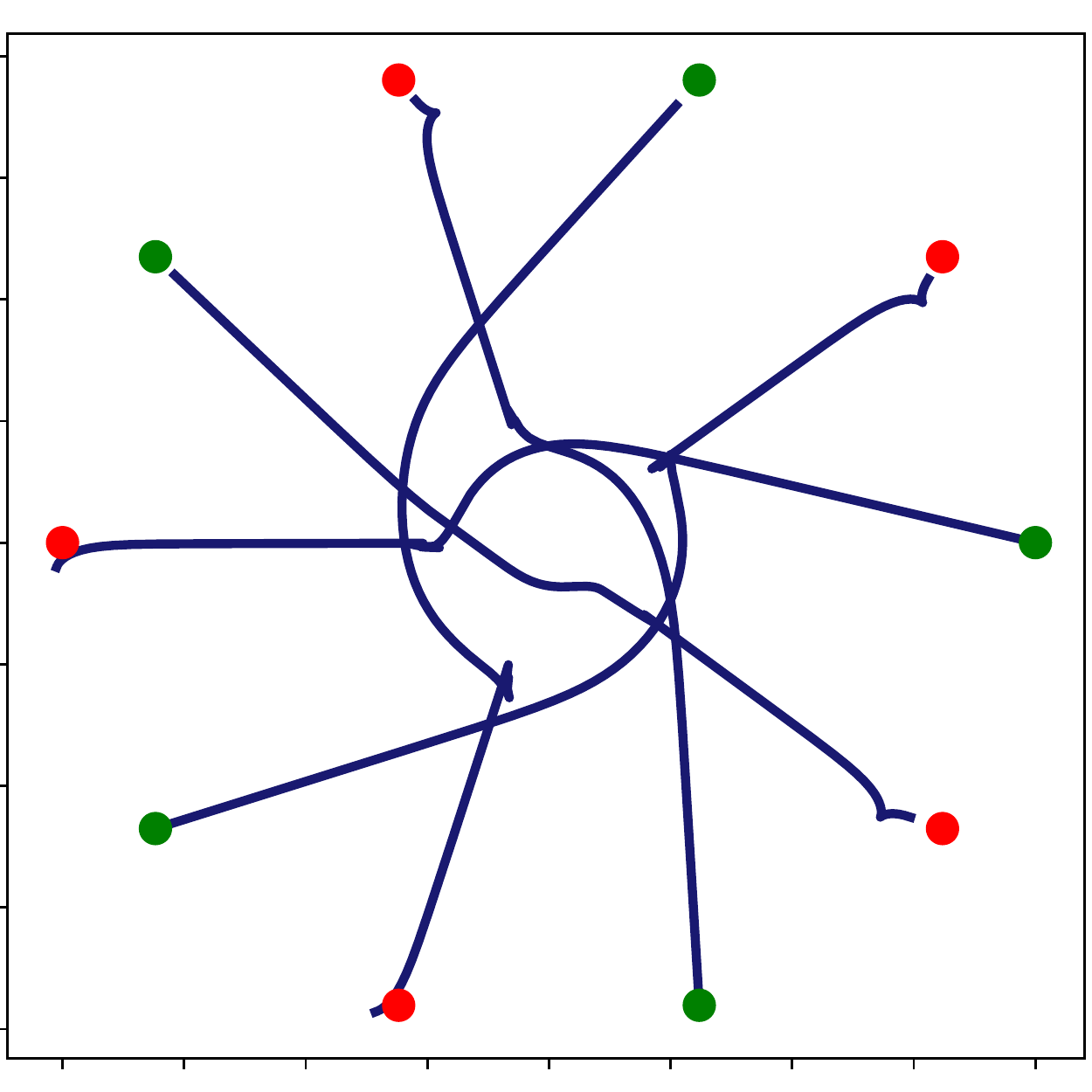}}
  \caption{\flow--GDS~\cite{cheng2018rmpflow}}
  \label{fig:multi_robot_gds}
\end{subfigure}
	\caption{
	Multi-robot goal reaching task. (a) \flow--CLF with spiral nominal controllers. The robots move to their goal smoothly. (b) \flow--GDS with the goal attractor given by a GDS. Due to the symmetry of the configuration, the system suffers from deadlock when the robots are near the center: the robots oscillate around the deadlock configuration.\vspace{-4mm}
	}
    \label{fig:multi_robot}
\end{figure}

\subsection{Robotic Implementation}

We present an experiment conducted on the Robotarium~\cite{pickem2017robotarium}, a remotely accessible swarm robotics platform. In the experiment,  five robots are tasked with preserving a regular pentagon formation while the leader has an additional task of reaching a goal. We use the same \tree structure and parameters for most leaf-node RMPs as described in the formation preservation experiment  in~\cite{li2019multi}. The only difference is that we replace the GDS goal attractor in~\cite{li2019multi} with the spiral nominal controller augmented with the CLF condition~\eqref{eq:qp}.
\cref{fig:exp} presents the snapshots from the formation preservation experiment. In \cref{fig:exp_clf_0}--\cref{fig:exp_clf_38}, we see that the leader  approaches the goal with a spiral trajectory specified by the nominal controller, while other subtask controllers preserve distances and avoid collision. This shows the efficacy of our controller synthesis framework. By contrast, the robot moves in straight lines under the goal attractor given by the GDS (see \cref{fig:exp_gds_0}-\cref{fig:exp_gds_19}). Although it could be possible to redesign the subtask manifold such that there exists a GDS that produces similar behaviors, the \flow--CLF framework provides the user additional flexibility to shape the behaviors without worrying about the geometric properties of the subtask manifolds.

\begin{figure*}
    \centering
    \vspace{2mm}
\begin{subfigure}{.32\textwidth}
  \centering
  \resizebox{!}{1.1in}{\includegraphics{./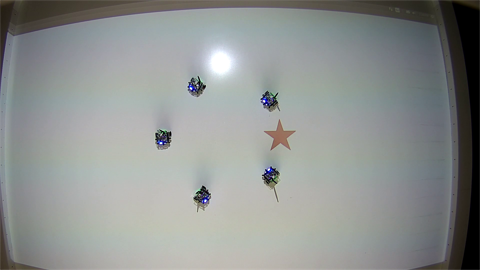}}
  \caption{Spiral CLF controller: $t=0s$}
  \label{fig:exp_clf_0}
\end{subfigure}
\begin{subfigure}{.32\textwidth}
  \centering
  \resizebox{!}{1.1in}{\includegraphics{./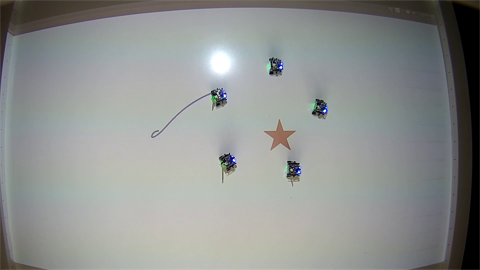}}
  \caption{Spiral CLF controller: $t=18s$}
  \label{fig:exp_clf_18}
\end{subfigure}
\begin{subfigure}{.32\textwidth}
  \centering
  \resizebox{!}{1.1in}{\includegraphics{./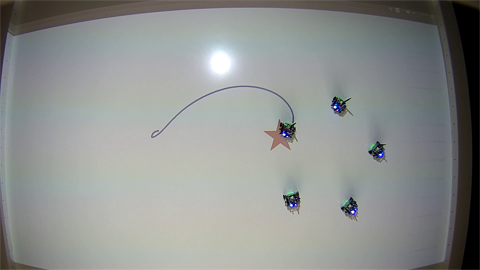}}
  \caption{Spiral CLF controller: $t=38s$}
  \label{fig:exp_clf_38}
\end{subfigure}
\\
\vspace{3mm}
\begin{subfigure}{.32\textwidth}
  \centering
  \resizebox{!}{1.1in}{\includegraphics{./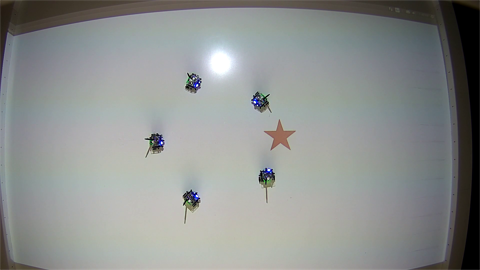}}
  \caption{GDS controller: $t=0s$}
  \label{fig:exp_gds_0}
\end{subfigure}
\begin{subfigure}{.32\textwidth}
  \centering
  \resizebox{!}{1.1in}{\includegraphics{./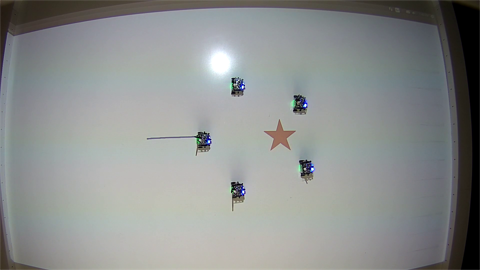}}
  \caption{GDS controller: $t=8s$}
  \label{fig:exp_gds_8}
\end{subfigure}
\begin{subfigure}{.32\textwidth}
  \centering
  \resizebox{!}{1.1in}{\includegraphics{./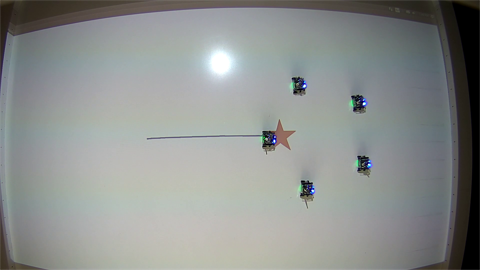}}
  \caption{GDS controller: $t=19s$}
  \label{fig:exp_gds_19}
\end{subfigure}

	\caption{
	Multi-robot formation preservation task. The robots are tasked with preserving a regular pentagon formation while the leader has an additional task of reaching a goal position. (a) \flow--CLF with a spiral nominal controller. (b) \flow--GDS. The goal (red star) and the trajectories (blue curves) of the leader robot are projected onto the environment through an overhead projector. \flow--CLF shapes the goal-reaching behavior through a spiral nominal controller. \vspace{-2mm}
	}
    \label{fig:exp}
\end{figure*}

\section{Conclusions}\label{sec:conclusions}
We consider robot control with multiple control specifications by adopting Riemannian Motion Policies (RMPs), a recent concept in robotics for describing control policies on manifolds, and \flow, the computational structure to combine these controllers. The stability results of \flow is re-established and extended through a rigorous CLF treatment. This new analysis suggests that any subtask controllers satisfying certain CLF constraints can be stably combined by \flow, while the original analysis given in~\cite{cheng2018rmpflow} only provides stability guarantees for a limited type of controller.
Based on this analysis, we propose a new computational framework, \flow--CLF, to stably combine individually designed subtask controllers.
This formulation provides users the flexibility of shaping behaviors of subtasks through nominal subtask controllers given by, e.g. heuristics, human demonstrations, and motion planners. The proposed \flow--CLF framework is validated through numerical simulation and deployment on real robots.

\bibliographystyle{ieeetr}
\bibliography{references}


\end{document}